\theoremstyle{plain}
\newtheorem{thm}{Theorem}
\newtheorem{lem}{Lemma}
\theoremstyle{definition}
\newtheorem{definition}{Definition}
\newtheorem{exmp}{Example}%[section]
\theoremstyle{remark}
\newtheorem{rem}{Remark}
\newtheorem{remark}{Remark}
\newcommand{\be}{\begin{enumerate}}
\newcommand{\ee}{\end{enumerate}}
\newcommand{\Tr}{\text{Tr}}
\newcommand{\D}{\mathsf D}
\newcommand{\R}{\mathsf R}
\newcommand{\Adj}{\text{Adj}}
\newcommand{\Prob}{\mathbb P}
\begin{document}
%
% paper title
% can use linebreaks \\ within to get better formatting as desired
\title{Privacy-Preserving Filtering for Event Streams}
%
%
% author names and IEEE memberships
% note positions of commas and nonbreaking spaces ( ~ ) LaTeX will not break
% a structure at a ~ so this keeps an author's name from being broken across
% two lines.
% use \thanks{} to gain access to the first footnote area
% a separate \thanks must be used for each paragraph as LaTeX2e's \thanks
% was not built to handle multiple paragraphs
%
\author{Jerome~Le~Ny,~\IEEEmembership{Member,~IEEE}% <-this % stops a space
\thanks{This work was supported by NSERC under Grant RGPIN-435905-13.
J. Le Ny is with the department of Electrical Engineering, Polytechnique Montreal, 
and with GERAD, Montreal, QC H3T 1J4, Canada. {\tt\small jerome.le-ny@polymtl.ca}}% <-this % stops a space
\thanks{Preliminary versions of some of the results contained in this paper were presented 
at CDC 2013 and CDC 2014 \cite{LeNy_CDC13_eventStreamDP,LeNy_CDC14_MIMOeventFiltering}.}% <-this % stops a space
}
%\thanks{Manuscript received April 19, 2005; revised January 11, 2007.}}

% note the % following the last \IEEEmembership and also \thanks - 
% these prevent an unwanted space from occurring between the last author name
% and the end of the author line. i.e., if you had this:
% 
% \author{....lastname \thanks{...} \thanks{...} }
%                     ^------------^------------^----Do not want these spaces!
%
% a space would be appended to the last name and could cause every name on that
% line to be shifted left slightly. This is one of those "LaTeX things". For
% instance, "\textbf{A} \textbf{B}" will typeset as "A B" not "AB". To get
% "AB" then you have to do: "\textbf{A}\textbf{B}"
% \thanks is no different in this regard, so shield the last } of each \thanks
% that ends a line with a % and do not let a space in before the next \thanks.
% Spaces after \IEEEmembership other than the last one are OK (and needed) as
% you are supposed to have spaces between the names. For what it is worth,
% this is a minor point as most people would not even notice if the said evil
% space somehow managed to creep in.

% The paper headers
%\markboth{Submitted to the IEEE Transactions on Automatic Control}%
\markboth{}%
{Le Ny: Privacy-Preserving Filtering for Event Streams}
% The only time the second header will appear is for the odd numbered pages
% after the title page when using the twoside option.
% 
% *** Note that you probably will NOT want to include the author's ***
% *** name in the headers of peer review papers.                   ***
% You can use \ifCLASSOPTIONpeerreview for conditional compilation here if
% you desire.

% If you want to put a publisher's ID mark on the page you can do it like
% this:
%\IEEEpubid{0000--0000/00\$00.00~\copyright~2007 IEEE}
% Remember, if you use this you must call \IEEEpubidadjcol in the second
% column for its text to clear the IEEEpubid mark.

% use for special paper notices
%\IEEEspecialpapernotice{(Invited Paper)}

% make the title area
\maketitle

\begin{abstract}
% IEEEtran.cls defaults to using nonbold math in the Abstract.
% This preserves the distinction between vectors and scalars. However,
% if the journal you are submitting to favors bold math in the abstract,
% then you can use LaTeX's standard command \boldmath at the very start
% of the abstract to achieve this. Many IEEE journals frown on math
% in the abstract anyway.
%\boldmath
Many large-scale information systems such as intelligent transportation systems, 
smart grids or smart buildings collect data about the activities of their users to optimize 
their operations. To encourage participation and adoption of these systems, it is becoming 
increasingly important that the design process take privacy issues into consideration.
In a typical scenario, signals originate from many sensors capturing events 
involving the users, and several statistics of interest need to be continuously 
published in real-time.  
This paper considers the problem of providing differential privacy guarantees 
for such multi-input multi-output systems processing event streams. 
%and operating continuously. 
We show how to construct and optimize various extensions 
of the zero-forcing equalization mechanism, which we previously proposed for 
single-input single-output systems. Some of these extensions can take a 
model of the input signals into account.
%
%We discuss the privacy-utility tradeoffs\todo{really?}~ of our privacy-preserving 
%mechanisms, and describe an application to privately monitoring and forecasting 
%occupancy in a building equipped with a dense network of motion detection sensors. 
%\todo{adjust based on the actual simulations}
We illustrate our privacy-preserving filter design methodology through the problem 
of privately monitoring and forecasting occupancy in a building equipped with multiple 
motion detection sensors.
%, and analyzing the activity of a shared processing server.
\end{abstract}

% Note that keywords are not normally used for peerreview papers.
\begin{IEEEkeywords}
Privacy, Filtering, Estimation
\end{IEEEkeywords}

\section{Introduction}

Privacy issues associated with social networking applications or monitoring and decision systems collecting personal data to operate
are receiving an increasing amount of attention \cite{Weber10_law, PCAST14_privacy}. Indeed, privacy concerns 
are already resulting in delays or cancellations in the deployment of some smart power grids, location-based services, or 
civilian unmanned aerial systems for example \cite{EPIC03_privacyCenter}.
In order to encourage the adoption of these systems, which can provide important societal benefits, new tools are needed 
to provide clear privacy protection guarantees and allow users to balance utility with privacy rigorously \cite{Chan03_securityPrivacy_SN}.

Since offering privacy guarantees for a system generally involves sacrificing some level of 
performance, evaluating the resulting trade-offs %rigorously 
requires a quantitative definition of privacy. Various such definitions have been proposed, such as 
disclosure risk \cite{Duncan86_disclosure} in statistics, $k$-anonymity \cite{Sweeney02_kAnon}, 
information-theoretic privacy \cite{Sankar11_privacyInfoTheoretic}, or conditions based on 
observability \cite{Xue13_securityAVN, Manitara_ECC13_privacyConsensus}.
However, in the last few years the notion of differential privacy has emerged essentially as a standard 
specification \cite{Dwork06_DPcalibration, Dwork_ICAL06_DP}. Intuitively, a system processing privacy-sensitive 
inputs from individuals is differentially private if its published outputs are not too sensitive to the data provided 
by any single participant. This definition is naturally linked to the notion of system gain for 
dynamical systems, see \cite{LeNy_DP_CDC12, LeNyDP2012_journalVersion}.
One operational advantage of differential privacy compared to other definitions is that it provides strong guarantees 
without involving the difficult task of modeling all the available auxiliary information that could be linked to the 
published outputs, despite the fact that unanticipated privacy breaches are typically due to the presence of 
this side information \cite{Sweeney02_kAnon, Narayanan08_netflixBreach, Calandrino11_privacyAttackCollabFilt}.

Differential privacy is a strong notion of privacy, but might require large perturbations to the
published results of an analysis in order to hide individuals' data. This is especially true for applications 
where users continuously contribute data over time, and it is thus important to carefully design real-time mechanisms 
that can limit the impact on system performance of differential privacy requirements. Previous work on designing differentially
private mechanisms for the publication of time series include \cite{Rastogi10_DPtimeSeries, Li11_DPcompressive},
but these mechanisms are not causal and hence not suited for real-time applications. 
The ZFE mechanism of Section \ref{section: ZFM} could also be interpreted as a dynamic, causal version of the 
matrix mechanism introduced in \cite{LiMiklau12_adaptiveMech} for static databases.
The papers \cite{Dwork10_DPcounter, Chan11_DPcontinuous, Bolot11_DPdecayingSums}
describe real-time differentially private mechanisms to approximate a few specific filters processing a stream of $0-1$ variables,
representing the occurence of events attributed to individuals. For example, \cite{Dwork10_DPcounter, Chan11_DPcontinuous} consider 
a private accumulator providing at each time the total number of events that occurred in the past. This paper is inspired 
by this scenario, and builds on our previous work on this problem in \cite[Section IV]{LeNy_DP_CDC12} \cite[Section VI]{LeNyDP2012_journalVersion}.
Here we extend our analysis in particular to multi-input multi-output (MIMO) linear time-invariant systems, which considerably 
broadens the applicability of the model to more common situations where multiple sensors monitor an environment 
and we wish to concurrently publish several statistics of interest.
An application example is that of analyzing spatio-temporal records provided by networks of simple
counting sensors, e.g., motion detectors in buildings or inductive-loop detectors in traffic information systems 
\cite{LeNy14_traffic}. 
The literature on the differentially private processing of multi-dimensional time series is still very limited, 
but includes \cite{Cao13_slidingWindow}, which considers a single-input multiple-output filter where each output channel
corresponds to a moving average filter with a different size for the averaging window,
%(considered from a very different point of view) 
 as well as \cite{Fan13_mimoSeries}, which discusses an application to traffic monitoring.
%\todo{Maybe move to a new section on related work}

To summarize, the contributions and organization of this paper are as follows. In Section \ref{section: statement}, 
we present a new generic scenario where we need to approximate a general MIMO linear time-invariant system by
a mechanism offering differential privacy guarantees for the input signals. The formal definitions necessary to state
the problem are also provided in that section. In Section \ref{section: sensitivity} we perform some preliminary
system sensitivity calculations that are necessary in the rest of the paper. Section \ref{section: ZFM} presents a
general approximation scheme for MIMO systems that provides differential privacy guarantees for the input signals.
The design methodology and performance of the privacy-preserving filter are illustrated in Section \ref{section: building monitoring}
in the context of a building occupancy estimation problem. 
Note that Sections III-V provide a more detailed presentation of the theoretical and simulation results contained in 
our conference paper \cite{LeNy_CDC14_MIMOeventFiltering}.
Finally, Section \ref{section: additional info} presents
additional privacy-preserving mechanisms that can approximate the desired outputs more closely but
require more information about the input signals to be publicly available, .e.g., their second-order statistics. 
It improves and extends to the MIMO case the results presented in our conference paper \cite{LeNy_CDC13_eventStreamDP}. 
This section also illustrates the relationship between our problem and certain joint Transmitter-Receiver optimization problems 
arising in the communication systems literature \cite{Salz:85:MIMOopt, Yang:94:TxRxOpt}.

%\textcolor{red}{[In the last section, we obtain optimization problems similar to the joing Tx-Rx optimization problems
%arising in the communication systems literature \cite{Salz:85:MIMOopt, Yang:94:TxRxOpt}, and for which 
%waterfilling-type solutions are typical \cite{Palomar:TSP05:waterfilling}.]}

%In addition, we describe mechanisms that can take into account certain known statistics about the input streams, e.g.,
%second order statistics for wide-sense stationary signals, and improve on the performance of the 
%Minimum Mean Square Error (MMSE) mechanism discussed in \cite[Section VI.A]{LeNyDP2012_journalVersion}.

\textbf{Notation:} 
Throughout the paper we use the following standard abbreviations: LTI for Linear Time-Invariant, SISO for Single-Input Single-Output, 
SIMO for Single-Input Multiple-Output, and MIMO for Multiple-Input Multiple-Output. Unless specified otherwise, dynamical systems
or filters are assumed causal, and transfer functions have real-valued coefficients.
We fix a base probability space $(\Omega, \mathcal F, \Prob)$. For $m$ an integer with $m \geq 1$, we write $[m] := \{1,\ldots,m\}$.
The notations $|x| _1= \sum_{k=1}^p |x_k|$ and $|x| _2= \left( \sum_{k=1}^p |x_k|^2 \right)^{1/2}$ are used to denote the 
%Euclidean norm for $x$ 
$1$- and $2$-norms in $\mathbb R^p$ or $\mathbb C^p$, and we reserve the notation $\| \cdot \|$ for norms on
signal and system spaces. $\text{col}(x_{1}, \ldots x_{p})$ denotes a column vector or signal
with components $x_i$, $i=1,\ldots,p$, and $\text{diag}(x_1,\ldots,x_m)$ denotes a diagonal $m \times m$ matrix
with the $x_i$'s on the diagonal. % and zeros elsewhere.}
%\textcolor{red}{[col is use in the proof of thm 3. If used nowhere else, remove from here.]}
Finally, for $H$ a Hermitian matrix, $H \succ 0$ means that it is positive definite, and $H \succeq 0$ that it is positive semi-definite.

%%%%%%%%%
% New paper
%%%%%%%%%

%SIMO case with sliding windows: \cite{Cao13_slidingWindow}.
%Not clear how to extend previous analyses for counts to general signal processing systems.

%Examples of applications: monitoring diseases and other health-care related issues, traffic information and 
%location-based services more generally, economics. In these scenarios, we have a record of the number of
%people with a certain characteristic in each region at each time. Each person can be in at most one region
%at each time.

%
%For spatial monitoring, we will assume that each person contributes to the dataset of each region only once. For example,
%it is counted when it enters the region (as with traffic detectors for example).
%
%Also, multiple inputs could come from breaking a single stream to which a user can contribute multiple times into 
%multiple streams, if we have say a bound on the inter-event time for the users. Example, motions through a building.
%
%The real-time requirement is thus motivated by the fact that we would like to integrate our filter outputs into 
%real-time information and decision-making systems, e.g., traffic information systems.

% !TEX root =  ../dpEventFiltering.tex

\section{Problem Statement}			\label{section: statement}

%\subsection{Application Examples}

%To do. Present one or two applications for which we have data, that will be used for simulations at the end.

\subsection{Generic Scenario}	\label{section: generic scenario}

We consider $m$ sensors detecting events, with sensor $i$ producing a discrete-time scalar signal $\{u_{i,t}\}_{t \geq 0} \in \mathbb R$, for $i \in [m]$. 
In a building monitoring scenario for example, %such as the one described in Section \ref{section: building monitoring}, 
the sensors could be motion detectors distributed at various locations and polled at regular intervals, with $u_{i,t} \in \mathbb N$ 
the number of detected events reported for period $t$.
We denote $u$ the resulting vector-valued signal, i.e., $u_t \in \mathbb R^m$.
A linear time-invariant (LTI) filter $F$, with $m$ inputs and $p$ outputs, takes input signals $u$ 
from the sensors and publishes output signals $y = Fu$ of interest, with $y_t \in \mathbb R^p$. 
%In other words, in the general case we consider systems with $m$ inputs and $p$ outputs. 
In our example, we might be interested in continuously updating real-time estimates of the number of people in
various parts of the building, as well as short- and medium-term occupancy forecasts, in order to optimize the 
operations of the Heating, Ventilation, and Air Conditioning (HVAC) system.
%
%Moreover, it is generally necessary to produce multiple analysis, i.e., multiple output signals. It is also required here
%that the outputs be produced in real-time, or perhaps with a small bounded delay. 
%
The problem considered in this paper consists in replacing the filter $F$ by a system processing the input $u$ and producing
a signal $\hat y$ as close as possible to the desired output $y$ (minimizing for example the mean squared error 
$
\lim_{T \to \infty} \frac{1}{T} \sum_{t=0}^\infty \mathbb E[|y_t - \hat y_t|_2^2]
$), 
while providing some privacy guarantees to the individuals from which the input signals $u$ originate. 
The privacy constraint is explained and quantified in the next subsection.

\subsection{Differential Privacy}	\label{section: DP def}

As mentioned in the introduction, a differentially private mechanism publishes data in a way that is not too sensitive to the presence 
or absence of a single individual. A formal definition of differential privacy is provided in Definition \ref{def: differential privacy original} below. 
In the previous building monitoring example, one goal of a privacy constraint could be to 
provide guarantees that an individual cannot be tracked too precisely from the published (typically aggregate) data. 
Indeed, Wilson and Atkeson \cite{Wilson05_tracking} for example demonstrate how to track individual users in a building using a network 
of simple binary sensors such as motion detectors.

\subsubsection{Adjacency Relation}		

%We could say that two multi-dimensional inputs $u, u'$ are adjacent if and only if there is a single time $k$ at which all coordinates 
%$u_i[k]$ differ by at most $1$. Think for example at input streams that count, for each period, the number of people with various characteristics
%(one coordinate for each characteristic). Then one person can change all input streams by $1$ or $0$, but at a single time (assuming each
%person can contribute its data only once).
%
%Or think of streams as distinct places being visited. Then a given person can contribute to all streams, but only at most one and
%all the contributions are at different times.

Formally, we start by defining a symmetric binary relation, denoted $\Adj$, on the space $\D$ of datasets of interest, which 
captures what it means for two datasets to differ by the data of a single individual.
Essentially, it is hard to determine from a differentially private output which of any two adjacent input datasets was used.
Here, $\D := \{u: \mathbb N \mapsto \mathbb R^m\}$ is the set of input signals, and we have $\Adj(u,u')$ if and only if 
we can obtain the signal $u'$ from $u$ by adding or subtracting the events corresponding to just one user. Motivated again 
by applications to spatial monitoring,
%motion detection, 
we consider in this paper the following adjacency relation 
\ifthenelse {\boolean{TwoColEq}} 
{
\begin{align}	\label{eq: adjacency relation}
&\text{Adj}(u,u') \text{ iff } \nonumber \\
&\forall i \in [m], \exists t_i \in \mathbb N, \alpha_i \in \mathbb R, \text{ s.t. } u'_i - u_i = \alpha_i \delta_{t_i}, |\alpha_i| \leq k_i,
\end{align}
}
{
\begin{align}	\label{eq: adjacency relation}
&\text{Adj}(u,u') \text{ iff } \forall i \in [m], \exists t_i \in \mathbb N, \alpha_i \in \mathbb R, \text{ s.t. } u'_i - u_i = \alpha_i \delta_{t_i}, |\alpha_i| \leq k_i,
\end{align}
}
parametrized by a vector $k \in \mathbb R^m$ with components $k_i > 0$. 
%\todo{note that $k_i = 0$ is an issue in the ZFE thms}
According to (\ref{eq: adjacency relation}), a single individual can affect each input signal component at a \emph{single time} (here $\delta_{t_i}$ denotes the
discrete impulse signal with impulse at $t_i$), and by at most $k_i$. % \in \mathbb R_+$. 
%We adopt the following notation in the rest of the paper. We denote by $k \in \mathbb R^m$ the vector with components $k_i$.
Let $e_i \in \mathbb R^m$ be the $i^{th}$ basis vector, i.e., with coordinates $e_{ij} = \delta_{ij}, j =1, \ldots, m$. Then for two adjacent signals $u$, $u'$, 
we have with the notation in (\ref{eq: adjacency relation})
\begin{equation}	\label{eq: adjacent signals}
u'-u = \sum_{i=1}^m \alpha_i \delta_{t_i} e_i.
\end{equation}
Note in passing that we can place 
additional constraints on $k$ to capture additional knowledge about the problem, which can help design mechanisms with better performance,
as we discuss later. 
For example, if we known that a given person can activate at most $l < m$ sensor and each $k_i$ is $1$, we can add the 
constraint $|k|_1 \leq l$. %$\|k\|_2 \leq \sqrt{l}$.}

The adjacency relation (\ref{eq: adjacency relation}) extends the one considered 
in \cite{Dwork10_DPcounter, Chan11_DPcontinuous, LeNyDP2012_journalVersion} to the case of multiple input signals.
It puts two constraints on the influence that an individual can have on
the input data in order for a differentially private mechanism to offer him guarantees. First, any given sensor can report an event
due to the presence of the individual only once over the time interval of interest for our analysis. This is a sensible constraint in 
applications such as traffic monitoring with fixed motion detectors activated only once by each car traveling along a road, certain 
location-based services where a customer would check-in say at most once per day at each visited store, or certain health-monitoring
applications where an individual would report a sickness only once.
For a building monitoring scenario however, a single user could trigger 
the same motion detector several times over a relatively short period. A first solution consists in splitting the data 
stream of problematic sensors into several successive intervals, each considered as the signal from a new virtual sensor, so that an 
individual's data is present only once in each interval. A MIMO mechanism can then process such data and offer guarantees, addressing one
of the main issues for the applicability of the model proposed in \cite{Dwork10_DPcounter, Chan11_DPcontinuous}. However, increasing
the number of inputs degrades the privacy guarantees or the output quality that we can provide. Hence in general no privacy 
guarantee will be offered to users who activate the same sensor too frequently.
The second constraint imposed by (\ref{eq: adjacency relation}) is that we bound the magnitude of an individual's contribution by $k_i$, 
but this is not really problematic in applications such as motion detection, where we can typically take $k_i = 1$. 

\subsubsection{Definition of Differential Privacy}

Mechanisms that are differentially private necessarily randomize their outputs, in such a way that they satisfy the following property
\cite{Dwork06_DPcalibration, Dwork_ICAL06_DP, Dwork06_DPgaussian}.

\begin{definition}	\label{def: differential privacy original}
Let $\D$ be a space equipped with a symmetric binary relation denoted $\Adj$, and let $(\R, \mathcal M)$ 
be a measurable space. Let $\epsilon, \delta \geq 0$. A mechanism $M: \D \times \Omega \to \R$ is 
$(\epsilon, \delta)$-differentially private for $\Adj$ (and $\mathcal M$) if for all $d,d' \in \D$ such that $\Adj(d,d')$, we have
\begin{align}	\label{eq: standard def approximate DP original}
\Prob(M(d) \in S) \leq e^{\epsilon} \Prob(M(d') \in S) + \delta, \;\; \forall S \in \mathcal M. 
\end{align}
If $\delta=0$, the mechanism is said to be $\epsilon$-differentially private. 
\end{definition}

This definition quantifies the allowed deviation for the output distribution of a differentially private mechanism for two
adjacent datasets $d$ and $d'$. One can also show that it is impossible to design a statistical test with small 
error to decide if $d$ or $d'$ was used by a differentially private mechanism to produce its output
\cite{Wasserman10_DPstatistics, Kairouz:13:DPcomposition}.
In this paper, the space $\D$ was defined as the space of input signals, and the adjacency relation considered is (\ref{eq: adjacency relation}). 
The output space $\R$ is simply the space of output signals $\R:=\{y: \mathbb N \to \mathbb R^p\}$. Finally, a differentially private mechanism 
will consist of a system approximating our MIMO filter of interest $F$, as well as a source of noise necessary to randomize the outputs and satisfy 
(\ref{def: differential privacy original}). %$(\ref{eq: standard def approximate DP original})$. 
We also refer the reader to \cite{LeNy_DP_CDC12} for a technical discussion on the (standard) $\sigma$-algebra $\mathcal M$ used on 
the output signal space to offer useful guarantees.

\subsubsection{Sensitivity}

Enforcing differential privacy can be done by randomly perturbing the published output of a system, at the price of reducing its utility or quality.
Hence, we are interested in evaluating as precisely as possible the amount of noise necessary to make a mechanism differentially private. For this purpose, 
the following quantity plays an important role. 
%Note that the notation $|x| = \left( \sum_{k=1}^p |x_k|^2 \right)^{1/2}$ is used throughout the paper to denote the Euclidean norm for $x$ in $\mathbb R^p$ or $\mathbb C^p$

\begin{definition}	\label{def: sensitivity}
The $\ell_2$-sensitivity of a system $G$ with $m$ inputs and $p$ outputs with respect to the adjacency relation $\Adj$ is defined by
\begin{align*}
\Delta^{m,p}_2 G &= \sup_{\text{Adj(u,u')}} \| Gu - Gu' \|_2 = \sup_{\text{Adj(u,u')}} \| G(u - u') \|_2, 
\end{align*}
%The second equality is Parceval identity.
where by definition $\| G v \|_2 = \sqrt{\sum_{t=-\infty}^\infty | (Gv)_t |_2^2}$. 
%\[
%\| G v \|_2^2 = \sum_{t=0}^\infty | [Gv]_t |^2 = \frac{1}{2 \pi} \int_0^{2 \pi} | G(e^{j \omega}) v(e^{j \omega}) |^2 d \omega,
%\]
%et $|x| = \left( \sum_{k=1}^p |x_k|^2 \right)$ denotes the Euclidean norm for $x \in \mathbb C^p$. 
\end{definition} 

\subsubsection{A Basic Differentially Private Mechanism}

The basic mechanism of Theorem \ref{eq: basic DP mechanism} below (see \cite{LeNyDP2012_journalVersion}), 
extending  \cite{Dwork06_DPgaussian}, can be used to answer queries in a differentially private way.
%A differentially private mechanism proposed in  modifies an answer to a 
%numerical query by adding iid zero-mean Gaussian noise.
%
To present the result, we recall first the definition of the $\mathcal Q$-function 
$\mathcal Q(x) := \frac{1}{\sqrt{2 \pi}} \int_x^{\infty} e^{-\frac{u^2}{2}} du$. Now for $\epsilon, \delta > 0$, %$\frac{1}{2} \geq \delta > 0$,
let $K = \mathcal Q^{-1}(\delta)$ and define $\kappa_{\delta,\epsilon} = \frac{1}{2 \epsilon} (K+\sqrt{K^2+2\epsilon})$.
%\textcolor{red}{[Check in the proof where we should have the condition $\delta \leq 1/2$.]}

\begin{thm}	\label{eq: basic DP mechanism}
Let $G$ be a system with $m$ inputs and $p$ outputs, and with $\ell_2$-sensitivity $\Delta^{m,p}_2 G$ with
respect to an adjacency relation $\Adj$. 
Then the mechanism $M(u) = Gu + w$, where $w$ is a $p$-dimensional Gaussian white noise with covariance matrix
$\kappa_{\delta,\epsilon}^2 (\Delta^{m,p}_2 G)^2 I_p$ is $(\epsilon,\delta)$-differentially private with respect to $\Adj$.
\end{thm}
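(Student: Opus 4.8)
The plan is to reduce the statement to a one-dimensional Gaussian computation and then verify the resulting inequality from the definition of $\kappa_{\delta,\epsilon}$.

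\emph{Reduction to a scalar problem.} Fix inputs $u, u'$ with $\Adj(u,u')$, set $v := Gu - Gu'$, and denote by $\mu_u$, $\mu_{u'}$ the laws of $M(u)$ and $M(u')$. By Definition \ref{def: sensitivity}, $v$ is a square-summable $\mathbb R^p$-valued signal with $\|v\|_2 \le \Delta^{m,p}_2 G =: \Delta$. If $v=0$ then $\mu_u = \mu_{u'}$ and (\ref{eq: standard def approximate DP original}) is immediate, so assume $v\neq 0$ and put $\sigma := \kappa_{\delta,\epsilon}\Delta$. Then $M(u)=Gu+w$ and $M(u')=Gu'+w$ are Gaussian processes with the same white-noise covariance $\sigma^2 I_p$ and means differing exactly by $v$. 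Since $v\in\ell_2$, the scalar statistic $\langle w, v\rangle := \sum_t \langle w_t, v_t\rangle$ is a well-defined $N(0,\sigma^2\|v\|_2^2)$ random variable, and the Cameron--Martin formula shows $\mu_u$ is absolutely continuous with respect to $\mu_{u'}$ with privacy loss
\[
L := \ln\frac{d\mu_u}{d\mu_{u'}} = \frac{1}{\sigma^2}\langle z - Gu', v\rangle - \frac{\|v\|_2^2}{2\sigma^2},
\]
a measurable function of $\langle w, v\rangle$ alone. (This is the one point that needs care because of the infinite time horizon: $\sigma^2 I$ is not trace class on $\ell_2$, so the remedy, as in \cite{LeNyDP2012_journalVersion}, is to argue directly with the scalar process $\langle w, v\rangle$ rather than with a Gaussian measure on the whole sequence space.)

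\emph{The scalar estimate.} For any $S\in\mathcal M$ we have $\Prob(M(u)\in S) - e^\epsilon\Prob(M(u')\in S) = \int_S (e^L - e^\epsilon)\, d\mu_{u'}$, which is maximized by $S^\star := \{L > \epsilon\}$. Substituting the expression for $L$ and writing $r := \|v\|_2/\sigma$, a direct calculation with the standardized $\langle w, v\rangle$ gives
\[
\Prob(M(u)\in S^\star) = \mathcal Q\!\left(\frac{\epsilon}{r} - \frac{r}{2}\right), \qquad \Prob(M(u')\in S^\star) = \mathcal Q\!\left(\frac{\epsilon}{r} + \frac{r}{2}\right).
\]
Dropping the nonnegative term $e^\epsilon\,\mathcal Q(\epsilon/r + r/2)$, it suffices to show $\mathcal Q(\epsilon/r - r/2)\le\delta$. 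The map $r\mapsto \epsilon/r - r/2$ is decreasing on $(0,\infty)$ and $r = \|v\|_2/\sigma \le \Delta/\sigma = 1/\kappa_{\delta,\epsilon}$, so since $\mathcal Q$ is decreasing it is enough to establish the inequality at $r_0 := 1/\kappa_{\delta,\epsilon}$.

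\emph{Conclusion via $\kappa_{\delta,\epsilon}$.} With $K := \mathcal Q^{-1}(\delta)$, rationalizing gives $r_0 = 1/\kappa_{\delta,\epsilon} = 2\epsilon/(K + \sqrt{K^2+2\epsilon}) = \sqrt{K^2+2\epsilon} - K$, which is precisely the positive root of $r^2 + 2Kr - 2\epsilon = 0$, i.e.\ of $\epsilon/r - r/2 = K$. Hence $\mathcal Q(\epsilon/r_0 - r_0/2) = \mathcal Q(K) = \delta$, which yields (\ref{eq: standard def approximate DP original}) and proves the theorem. The main obstacle is the measure-theoretic reduction of the first step; once the privacy loss is identified as a function of the single Gaussian statistic $\langle w, v\rangle$ with $\|v\|_2 \le \Delta$, everything downstream is the elementary tail computation above.
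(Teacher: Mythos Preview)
The paper does not give its own proof of this theorem; it is stated with a reference to \cite{LeNyDP2012_journalVersion} (itself extending \cite{Dwork06_DPgaussian}). Your argument is a correct reconstruction of the standard Gaussian-mechanism proof in the signal setting: the Cameron--Martin reduction to the scalar statistic $\langle w,v\rangle$ is exactly the right way to handle the infinite time horizon (and your caveat about $\sigma^2 I$ not being trace class is well placed), the identification of the worst set $S^\star=\{L>\epsilon\}$ and the resulting expressions $\mathcal Q(\epsilon/r \mp r/2)$ are accurate, and the algebra showing that $r_0=1/\kappa_{\delta,\epsilon}$ solves $\epsilon/r - r/2 = \mathcal Q^{-1}(\delta)$ is precisely how the constant $\kappa_{\delta,\epsilon}$ is engineered. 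This is essentially the argument in the cited reference, so there is nothing to contrast.
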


%\vspace{0.1cm}
%\begin{thm}	\label{thm: Gaussian mech}
%Let $q: \D \to \mathbb R^k$ be a query.
%Then the Gaussian mechanism $M_q: \D \times \Omega \to \mathbb R^k$ 
%defined by $M_q(d) = q(d) + w$, with $w \sim \mathcal N\left(0,\sigma^2 I_k \right)$, 
%where $\sigma \geq \frac{\Delta_2 q}{2 \epsilon}(K + \sqrt{K^2+2\epsilon})$ and $K = \mathcal Q^{-1}(\delta)$,
%is $(\epsilon,\delta)$-differentially private.
%\end{thm}

The mechanism  $M$ described in Theorem \ref{eq: basic DP mechanism}, which is a differentially-private
version of a system $G$, is called an output-perturbation mechanism. We see that the amount of noise sufficient 
for differential privacy with this mechanism is proportional to the $\ell_2$-sensitivity 
of the filter and to $\kappa_{\delta,\epsilon}$, which can be shown to behave roughly as $O(\ln(1/\delta))^{1/2}/\epsilon$.
Note that we add noise proportional to the sensitivity of the whole filter $G$ independently on \emph{each} output, 
even if $G$ was diagonal say, otherwise trivial attacks that simply average a sufficient number of outputs could 
potentially detect the presence of an individual with high probability \cite{LeNyDP2012_journalVersion}.

In conclusion we could obtain a differentially private mechanism for our original problem by simply adding 
a sufficient amount of noise to the output of our desired filter $F$, provided we can compute its sensitivity, 
which is the topic of the next section. Moreover, it is possible in general to design mechanisms with much 
less overall noise than this output-perturbation scheme, as discussed in Sections \ref{section: ZFM} 
and \ref{section: additional info}.	

%\textcolor{red}{
%----
%To potentially include above, otherwise somewhere below on the performance index.
%----
%Throughout this paper, we measure 
%the precision of specific approximations, hence the utility of the released privacy-preserving output,  by the mean squared error (MSE) between 
%the published and desired outputs, i.e., 
%$
%\lim_{T \to \infty} \frac{1}{T} \sum_{t=0}^\infty \mathbb E[|e_t|^2],
%$
%with $e = y - \hat y$. The next section is devoted to the description of two ways of choosing the filters $G, H$ as linear filters.
%}

 % new paper

%!TEX root =  ../dpEventFiltering.tex

\section{Sensitivity Calculations}		\label{section: sensitivity}		%  and \\ Input Perturbation Mechanism}

For the following sensitivity calculations (see Definition \ref{def: sensitivity}), the $\mathcal H_2$ norm of an LTI system plays an important role. 
We recall its definition for a system $G$ with $m$ inputs
\[
\|G\|_2^2 = \sum_{i=1}^m \| G \delta_0 e_i\|_2^2 = \frac{1}{2 \pi} \int_{-\pi}^{\pi} \Tr(G^*(e^{j \omega}) G(e^{j \omega})) d \omega.
\]
Writing $G(z) = [G_{ij}(z)]_{i,j}$ for the $p \times m$ transfer matrix, we also note from the frequency domain definition that 
$
\|G\|_2^2 = \sum_{i,j} \|G_{ij}\|_2^2.
$

\subsection{Exact solutions for the SIMO and Diagonal Cases}

Generalizing the SISO scenario considered in \cite{LeNy_DP_CDC12, LeNyDP2012_journalVersion}
to the case of a SIMO system, 
%system with $m=1$ input but possibly multiple outputs (SIMO), 
we have immediately the following theorem.
\begin{thm}[SIMO LTI system]		\label{thm: SIMO sensitivity}
Let $G$ be a stable LTI system with one input and $p$ outputs. 
For the adjacency relation (\ref{eq: adjacency relation}), we have
$
\Delta^{1,p}_2 G = k_1 \|G\|_2,
$
where $\|G\|_2$ is the $\mathcal H_2$ norm of $G$.
\end{thm}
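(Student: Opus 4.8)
The plan is to compute the supremum in Definition \ref{def: sensitivity} directly by exploiting the linearity and time-invariance of $G$ together with the simple structure of the adjacency relation in the SIMO case. First I would observe that for $m=1$, the adjacency relation (\ref{eq: adjacency relation}) reduces to the statement that $u' - u = \alpha \delta_{t_0}$ for some $t_0 \in \mathbb N$ and some $\alpha \in \mathbb R$ with $|\alpha| \leq k_1$; this is immediate from (\ref{eq: adjacent signals}) with a single input channel. Hence
\[
\Delta^{1,p}_2 G = \sup_{t_0 \in \mathbb N,\ |\alpha| \leq k_1} \| G(\alpha \delta_{t_0}) \|_2 = \sup_{t_0 \in \mathbb N,\ |\alpha| \leq k_1} |\alpha| \, \| G \delta_{t_0} \|_2,
\]
using linearity of $G$ to pull out the scalar $\alpha$.

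Next I would use time-invariance: since $G$ is LTI, $G\delta_{t_0}$ is just $G\delta_0$ shifted by $t_0$, and the $\ell_2$ signal norm $\| \cdot \|_2$ defined in Definition \ref{def: sensitivity} is shift-invariant (the sum $\sum_{t=-\infty}^\infty |(\cdot)_t|_2^2$ is unchanged by a time shift). Therefore $\| G \delta_{t_0} \|_2 = \| G \delta_0 \|_2$ for every $t_0$, and the supremum over $t_0$ is trivial. What remains is $\sup_{|\alpha| \leq k_1} |\alpha| \, \| G\delta_0 \|_2 = k_1 \| G \delta_0 \|_2$, the supremum being attained at $\alpha = \pm k_1$. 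Finally, I would identify $\| G\delta_0 \|_2$ with the $\mathcal H_2$ norm $\|G\|_2$: for a one-input system this is exactly the definition $\|G\|_2^2 = \| G\delta_0 e_1 \|_2^2 = \| G\delta_0\|_2^2$ recalled at the start of Section \ref{section: sensitivity}. Stability of $G$ ensures this quantity is finite, so the supremum is genuinely achieved and equals $k_1 \|G\|_2$.

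There is essentially no hard step here — the result is "immediate," as the text says — but the one point deserving a line of care is the interchange of the impulse response with the signal norm and the invocation of shift-invariance; one should make explicit that $(G\delta_{t_0})_t = (G\delta_0)_{t-t_0}$ by time-invariance and that the defining sum for $\|\cdot\|_2$ runs over all of $\mathbb Z$, so no boundary terms are lost even though the inputs are indexed by $\mathbb N$. The only other thing to note is that the supremum over the closed set $|\alpha| \leq k_1$ is attained, so one may write an equality rather than merely a bound.
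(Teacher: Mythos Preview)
Your proposal is correct and follows exactly the paper's approach: write $u'-u=\alpha_1\delta_{t_1}$, pull out the scalar by linearity, identify $\|G\delta_{t_1}\|_2$ with $\|G\|_2$ via time-invariance, and note the bound is attained at $|\alpha_1|=k_1$. The paper's proof is the same argument compressed to a single line; you have simply made explicit the shift-invariance step and the identification with the $\mathcal H_2$ norm.
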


\begin{proof}
We have immediately for $u$ and $u'$ adjacent
\ifthenelse {\boolean{TwoColEq}} {
\begin{align*}
\|G (u - u') \|^2_2 &= |\alpha_1|^2 \| G \delta_{t_1} \|^2_2 \\
&\leq k^2_1 \| G \|^2_2,
\end{align*}
}
{
\begin{align*}
\|G (u - u') \|^2_2 &= |\alpha_1|^2 \| G \delta_{t_1} \|^2_2 \leq k^2_1 \| G \|^2_2,
\end{align*}
}
and the bound is attained if $|\alpha_1| = k_1$.
%where $g$ denotes the impulse response of $G$ (here $g_t \in \mathbb R^p, \forall t$).
\end{proof}

For a system $G$ with multiple inputs, the special case where $G$ is diagonal, i.e., its 
transfer matrix is $G(z) = \text{diag}(G_{11}(z), \ldots, G_{mm}(z))$, also leads to a simple sensitivity result.
Note that in this case, we have $\|G\|_2^2 = \sum_{i=1}^m \|G_{ii}\|^2$.
% immediate for example from the formula with trace, in the frequency domain

\begin{thm}[Diagonal LTI system] 	\label{thm: diagonal system sensitivity}
Let $G$ be a stable \emph{diagonal} LTI system with $m$ inputs and outputs. 
For the adjacency relation (\ref{eq: adjacency relation}), denoting $K = \text{diag}(k_1, \ldots, k_m)$,
we have
\[
\Delta^{m,m}_2 G = \|GK\|_2 = \left( \sum_{i=1}^m \|k_i G_{ii}\|_2^2 \right)^{1/2}.
\]
%where $K = \text{diag}(k_1, \ldots, k_m)$.
\end{thm}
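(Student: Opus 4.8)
The plan is to reduce the computation to the decoupled SISO situation already treated in the proof of Theorem~\ref{thm: SIMO sensitivity}, exploiting the fact that a diagonal system never mixes channels. First I would take two adjacent signals $u,u'$ and write, as in (\ref{eq: adjacent signals}), $u'-u = \sum_{i=1}^m \alpha_i \delta_{t_i} e_i$ with $|\alpha_i| \leq k_i$ for every $i \in [m]$. Since $G(z) = \text{diag}(G_{11}(z),\ldots,G_{mm}(z))$, linearity gives $G(u-u') = \sum_{i=1}^m \alpha_i (G_{ii}\delta_{t_i})\,e_i$, where the $i$-th summand is supported entirely on the $i$-th output coordinate.

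Next I would evaluate the signal $2$-norm. Because the $m$ summands occupy pairwise distinct output coordinates, at each time $t$ the vectors $\alpha_i (G_{ii}\delta_{t_i})_t\, e_i \in \mathbb R^m$ are mutually orthogonal, so
\[
\|G(u-u')\|_2^2 = \sum_{i=1}^m |\alpha_i|^2\, \|G_{ii}\delta_{t_i}\|_2^2 = \sum_{i=1}^m |\alpha_i|^2\, \|G_{ii}\|_2^2,
\]
where the last equality uses that each $G_{ii}$ is LTI, so shifting the impulse from $0$ to $t_i$ leaves the $\mathcal H_2$ norm of its impulse response unchanged. Using $|\alpha_i| \leq k_i$, this is bounded above by $\sum_{i=1}^m k_i^2 \|G_{ii}\|_2^2$, which already gives $\Delta^{m,m}_2 G \leq \left( \sum_{i=1}^m \|k_i G_{ii}\|_2^2 \right)^{1/2}$.

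Finally I would show the bound is tight by exhibiting a maximizing pair: choose $t_i = 0$ and $\alpha_i = k_i$ for all $i$ simultaneously, i.e., $u = 0$ and $u' = \sum_{i=1}^m k_i \delta_0 e_i$. This pair satisfies (\ref{eq: adjacency relation}) since that relation imposes no joint constraint linking the different components, and it attains $\|G(u-u')\|_2^2 = \sum_{i=1}^m k_i^2 \|G_{ii}\|_2^2$. Hence $\Delta^{m,m}_2 G = \left( \sum_{i=1}^m \|k_i G_{ii}\|_2^2 \right)^{1/2}$, and the identity with $\|GK\|_2$ follows from the frequency-domain formula for the $\mathcal H_2$ norm applied to the diagonal transfer matrix $GK = \text{diag}(k_1 G_{11},\ldots,k_m G_{mm})$. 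The only delicate points are the orthogonality/decoupling step and the verification that the extremal choice $|\alpha_i| = k_i$ can be made for all $i$ at once; the latter is precisely where we rely on the absence of the optional coupling constraints on $k$ mentioned after (\ref{eq: adjacency relation}), and it is the main (though mild) obstacle to the argument.
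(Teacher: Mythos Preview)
Your proof is correct and follows essentially the same approach as the paper: exploit the diagonal structure to decouple the channels, compute $\|G(u-u')\|_2^2 = \sum_{i=1}^m |\alpha_i|^2 \|G_{ii}\|_2^2$, bound by $|\alpha_i|\le k_i$, and note that equality is attained by taking $|\alpha_i|=k_i$ for all $i$. Your write-up is slightly more explicit about the orthogonality of the per-channel contributions and the identification with $\|GK\|_2$, but the argument is the same.
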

\begin{proof}
If $G$ is diagonal, then for u and u' adjacent, we have from (\ref{eq: adjacent signals})
\ifthenelse {\boolean{TwoColEq}} {
\begin{align*}
\|G(u-u')\|^2_2 &= \left \| \sum_{i=1}^m \alpha_i G \delta_{t_i} e_i \right\|^2_2 \\
&= \| \text{col}(\alpha_1 g_{11} * \delta_{t_1}, \ldots, \alpha_m g_{mm} * \delta_{t_m}) \|^2_2,
\end{align*}
}
{
\begin{align*}
\|G(u-u')\|^2_2 &= \left \| \sum_{i=1}^m \alpha_i G \delta_{t_i} e_i \right\|^2_2 
= \| \text{col}(\alpha_1 g_{11} * \delta_{t_1}, \ldots, \alpha_m g_{mm} * \delta_{t_m}) \|^2_2,
\end{align*}
}
where %$\text{col}(x_{1}, \ldots x_{m})$ denotes a signal with values in $\mathbb R^m$ if each $x_i$ is a scalar signal, and 
$g_{ii}$ denotes the impulse response of $G_{ii}$. Hence
\ifthenelse {\boolean{TwoColEq}} {
\begin{align*}
\|G(u-u')\|^2_2 &= \sum_{i=1}^m \|\alpha_i g_{ii} * \delta_{t_i}\|_2^2  \\
&= \sum_{i=1}^m |\alpha_i|^2 \|G_{ii}\|_2^2, 
\end{align*}
}
{
\begin{align*}
\|G(u-u')\|^2_2 &= \sum_{i=1}^m \|\alpha_i g_{ii} * \delta_{t_i}\|_2^2  
= \sum_{i=1}^m |\alpha_i|^2 \|G_{ii}\|_2^2, 
\end{align*}
}
and $|\alpha_i| \leq k_i$, for all $i$. Again the bound is attained if $|\alpha_i| = k_i$ for all $i$.
\end{proof}

\subsection{Upper and Lower Bound for the general MIMO Case}

For MISO or general MIMO systems, the sensitivity calculations are no longer so straightforward, because the
impulses on the various input channels, obtained from the difference of two adjacent signals $u, u'$,
all possibly influence any given output. Still, the following result provides simple bounds on the sensitivity.%,
%which will be used later for mechanism optimization.

\begin{thm} 	\label{thm: bound MIMO general}
Let $G$ be a stable LTI system with $m$ inputs and $p$ outputs. 
%$p \times m$ transfer matrix $G(z)=[G_1(z),\ldots,G_m(z)]$ (i.e., with columns $G_i$).
%, such that $\|G\|_2 < \infty$. 
For the adjacency relation (\ref{eq: adjacency relation}), denoting $K = \text{diag}(k_1, \ldots, k_m)$,
and $|k|_2 = \left(\sum_{i=1}^m k_i^2\right)^{1/2}$, we have
\begin{align}	\label{eq: MIMO sensitivity bounds}
\|GK\|_2 \leq \Delta^{m,p}_2 G \leq |k|_2 \|G\|_2.
\end{align}
%where $K = \text{diag}(k_1, \ldots, k_m)$ and $|k|_2 = \left(\sum_{i=1}^m k_i^2\right)^{1/2}$. %, and $\|G\|_2$ is the $\mathcal H_2$ norm of $G$.
\end{thm}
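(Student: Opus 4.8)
The plan is to prove the two inequalities in (\ref{eq: MIMO sensitivity bounds}) separately, using the expression (\ref{eq: adjacent signals}) for the difference of two adjacent signals together with the frequency-domain characterization of the $\mathcal H_2$ norm.

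\textbf{Lower bound.} For the inequality $\|GK\|_2 \leq \Delta^{m,p}_2 G$, the idea is to exhibit a particular pair of adjacent signals whose output distance equals $\|GK\|_2$. A natural candidate is obtained by placing all the impulses at a common time, say $t_i = 0$ for every $i$, and choosing $\alpha_i = k_i$ (the extremal magnitude). Then by (\ref{eq: adjacent signals}) we have $u' - u = \sum_{i=1}^m k_i \delta_0 e_i = \delta_0 (Ke)$ where $e = \text{col}(1,\ldots,1)$, so $G(u-u') = G(K\delta_0 e)$, and $\|G(u-u')\|_2 = \|GK\delta_0\|_2$. Since feeding the impulse $\delta_0$ through a system picks out exactly its $\mathcal H_2$ norm (as in the definition $\|G\|_2^2 = \sum_i \|G\delta_0 e_i\|_2^2$ recalled in Section \ref{section: sensitivity}), we get $\|GK\delta_0\|_2 = \|GK\|_2$. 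Taking the supremum over all adjacent pairs only makes things larger, which yields the lower bound. The one point requiring a little care is to confirm that $GK$ — the system $G$ post-composed with the static gain $K$ — indeed has $\mathcal H_2$ norm equal to $\|GK\delta_0\|_2$, but this is immediate since $K$ is a constant diagonal matrix and $GK$ is just $G$ with columns rescaled.

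\textbf{Upper bound.} For $\Delta^{m,p}_2 G \leq |k|_2 \|G\|_2$, the plan is to bound $\|G(u-u')\|_2$ for an arbitrary adjacent pair. Writing $v := u - u' = \sum_{i=1}^m \alpha_i \delta_{t_i} e_i$, pass to the frequency domain: $\hat v(e^{j\omega}) = \sum_{i=1}^m \alpha_i e^{-j\omega t_i} e_i$, so at each frequency $|\hat v(e^{j\omega})|_2^2 = \sum_{i=1}^m |\alpha_i|^2 \leq \sum_{i=1}^m k_i^2 = |k|_2^2$, using that the phases $e^{-j\omega t_i}$ have modulus one and the $e_i$ are orthonormal. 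By Parseval, $\|Gv\|_2^2 = \frac{1}{2\pi}\int_{-\pi}^\pi |G(e^{j\omega})\hat v(e^{j\omega})|_2^2\, d\omega \leq \frac{1}{2\pi}\int_{-\pi}^\pi \|G(e^{j\omega})\|_{\mathrm{op}}^2 |\hat v(e^{j\omega})|_2^2\, d\omega \leq |k|_2^2 \cdot \frac{1}{2\pi}\int_{-\pi}^\pi \|G(e^{j\omega})\|_{\mathrm{op}}^2\, d\omega$. The remaining step is to bound the operator norm by the Frobenius norm at each frequency, $\|G(e^{j\omega})\|_{\mathrm{op}}^2 \leq \Tr(G^*(e^{j\omega})G(e^{j\omega}))$, so that the integral is at most $\|G\|_2^2$. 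Combining gives $\|Gv\|_2 \leq |k|_2 \|G\|_2$, and taking the supremum over adjacent pairs finishes.

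The main obstacle, such as it is, is the upper bound: one must be careful that the cross terms do not help. Because the impulses sit at possibly different times $t_i$, the signal $v$ is not a single scaled impulse, and in the time domain there could in principle be beneficial cancellations; the clean way to see that the worst case is the ``coherent'' one is exactly the frequency-domain argument above, where the per-frequency bound $|\hat v(e^{j\omega})|_2^2 \leq |k|_2^2$ holds regardless of the $t_i$. An alternative, purely time-domain route is the triangle inequality $\|Gv\|_2 \leq \sum_i |\alpha_i| \|G\delta_{t_i} e_i\|_2 \leq \sum_i k_i \|G e_i\|_2$, but this gives the weaker bound $\bigl(\sum_i k_i \|Ge_i\|_2\bigr)$, which is larger than $|k|_2\|G\|_2$ by Cauchy–Schwarz only in the wrong direction, so the frequency-domain argument is the one to use. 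I would also remark that neither bound is tight in general, and note the cases (SIMO, diagonal) from Theorems \ref{thm: SIMO sensitivity} and \ref{thm: diagonal system sensitivity} where the lower bound is attained.
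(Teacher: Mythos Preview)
Your upper bound argument via Parseval is correct and is a clean alternative to the paper's proof. However, you then dismiss the time-domain route for the wrong reason: from $\|Gv\|_2 \leq \sum_i |\alpha_i|\,\|G\delta_{t_i}e_i\|_2$, Cauchy--Schwarz gives $\sum_i |\alpha_i|\,\|G\delta_{t_i}e_i\|_2 \leq \bigl(\sum_i |\alpha_i|^2\bigr)^{1/2}\bigl(\sum_i \|G\delta_{t_i}e_i\|_2^2\bigr)^{1/2} = |k|_2\,\|G\|_2$, which is exactly the desired bound, not a weaker one. This is in fact the paper's argument.

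Your lower bound, on the other hand, has a genuine gap. Placing all impulses at the common time $t_i=0$ with $\alpha_i=k_i$ produces the single vector-valued input $\delta_0\cdot Ke$, and $\|G(\delta_0\, Ke)\|_2$ is \emph{not} $\|GK\|_2$ in general: the $\mathcal H_2$ norm of a multi-input system is $\sum_i\|GK\delta_0 e_i\|_2^2$, summing the responses to impulses on each channel \emph{separately}, whereas your construction sends them simultaneously and picks up the cross terms. Concretely, take $m=2$, $p=1$, $G(z)=[1,\,-1]$ and $k_1=k_2=1$: then $\|GK\|_2=\sqrt{2}$, but your choice gives $G(u-u')=(1-1)\delta_0=0$, so $\|G(u-u')\|_2=0$, which establishes nothing. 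The paper's proof avoids this by taking the impulse times $t_1<\cdots<t_m$ far apart: since $G$ is stable, the tail of each column's impulse response decays, so for any $\eta>0$ one can choose the spacings large enough that $\bigl\|\sum_i k_i G\delta_{t_i}e_i\bigr\|_2^2 \geq \sum_i k_i^2\|G_i\|_2^2-\eta = \|GK\|_2^2-\eta$, and then let $\eta\to 0$. (Your idea can be repaired differently, by averaging over random signs $\alpha_i=\pm k_i$ at a common time to kill the cross terms in expectation and hence for some sign pattern, but as written the argument is incorrect.)
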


\begin{proof}
We have
$
G(u-u') = \sum_{i=1}^m \alpha_i \, G \delta_{t_i} e_i,
$
and moreover $\|G\|^2_2 = \sum_{i=1}^m \|G \delta_{t_i} e_i\|_2^2$ by definition.
For the upper bound, we can write
\begin{align*}
\|G(u-u')\|_2 &= \left\| \sum_{i=1}^m \alpha_i \, G \delta_{t_i} e_i \right \|_2 \\
&\leq \sum_{i=1}^m | \alpha_i | \| G \delta_{t_i} e_i \|_2 \\
&\leq |k|_2 \left( \sum_{i=1}^m \| G \delta_{t_i} e_i \|^2_2 \right)^{1/2},
\end{align*}
where the last inequality results from the Cauchy-Schwarz inequality.

For the lower bound, let us first take $u' \equiv 0$. Then consider an adjacent signal $u$ with a single discrete impulse
of height $k_i$ at time $t_i$ on each input channel $i$, for $i=1,\ldots,m$, with $t_1<t_2<\ldots<t_m$. Let $\eta > 0$. 
Denote the ``columns'' of $G$ as $G_i$ for $i=1,\ldots,m$, i.e., $G u = \sum_{i=1}^m G_i u_i$.
%For $i=1,\ldots,m$, denote by $G_i$ the  system such that $
Since 
$\|G\|_2 < \infty$, $\|G_i u_i\|_2<\infty$, and hence $|(G_i u_i)_t| \to 0$ as $t \to \infty$. Hence by taking $t_{i+1}-t_i$
large enough for each $1 \leq i \leq m-1$, i.e., waiting for the effect of impulse $i$ on the output to be sufficiently small, 
we can choose the signal $u$ such that
\[
\|Gu\|_2^2 = \left\| \sum_{i=1}^m G_i u_i \right\|_2^2 \geq  \sum_{i=1}^m k_i^2 \|G \delta_{t_i} e_i\|_2^2-\eta.
\]
Since this is true for any $\eta>0$ and  $\|G \delta_{t_i} e_i\|_2^2 = \|G_i\|_2^2$, we get 
$(\Delta^{m,p}_2 G)^2 \geq \|GK\|_2^2 = \sum_{i=1}^m k_i^2 \|G_i\|^2$.
\end{proof}

Note that if $k_1= \ldots =k_m$, the upper bound on the sensitivity is $k_1 \|G\|_2 \sqrt{m}$. 
%The squared sensitivity, which is related to the variance of the required amount of privacy-preserving noise
%in an output perturbation scheme, scales then linearly with the number of inputs. 
We can compare this bound to the situation where $G$ is diagonal, in which case the sensitivity is exactly $k_1 \|G\|_2$ from 
Theorem \ref{thm: diagonal system sensitivity}. %Moreover, 
The following example shows that the upper bound of Theorem \ref{thm: bound MIMO general} cannot be improved 
for the general MISO or MIMO case.

\begin{exmp}
Consider the MISO system $G(z) = [G_{11}(z), \ldots, G_{1m}(z)]$, with $g_{1i} = \delta_{\tau_i}$ the impulse response of $G_{1i}$, 
for some times $\tau_1, \ldots, \tau_m$.
Then $\|G\|_2^2 = m$. Now let $u' \equiv 0$ and $u = \sum_{i=1}^m \delta_{t_i} e_i$, so that $u$ and $u'$ are adjacent, 
with $k_1=\ldots=k_m=1$, and moreover let us choose the times $t_i$ such that $\tau_i+t_i$ is a constant, i.e., take $t_i = \kappa-\tau_i$
for some $\kappa \geq \max_i\{{\tau_i\}}$. Then $Gu = \sum_{i=1}^m g_{1i} * u_i = m \delta_\kappa$, and so $\|Gu\|_2^2 = m^2$. This shows
that the upper bound of Theorem \ref{thm: bound MIMO general} is tight in this case. Note that this happens because all the events
of the signal $u$ influence the output at the same time. Indeed, if the times $\tau_i+t_i$ are all distinct, then we get $\|Gu\|_2^2 = m$.
\end{exmp}
% in other words, one approach could be to try to prevent this superposition, perhaps by adding carefully chosen delays?

 % new paper
% !TEX root =  ../dpEventFiltering.tex

\subsection{Exact solution for the MIMO Case}

For completeness, we give in this subsection an exact expression for the sensitivity of a MIMO filter. 
Let $G$ be a stable LTI system with $m$ inputs and $p$ outputs, and state space representation
\begin{align}	
x_{t+1} &= A x_t + B u_t \label{eq: ss representation} \\
y_t &=C x_t + D u_t, \nonumber
\end{align}
with $x_0 = 0$. Recall the definition of the observability Gramian $P_0$, which is the unique 
positive semi-definite solution of the equation
\[
A^T P_0 A - P_0 + C^T C = 0.
\]
Let $B_i, D_i$ be the $i^{th}$ column of the matrix $B$ and $D$ respectively, for $i=1,\ldots,m$.
Finally, define for $i,j \in \{1,\ldots,m\}$, $i \neq j$, 
and $\tau$ in $\mathbb Z$
\ifthenelse {\boolean{TwoColEq}} 
{
\begin{align}
&S_{ij}^{\tau} = \label{eq: cross terms} \\
&\begin{cases}
B_i^T (A^{\tau-1})^T C^T D_j + B_i^T (A^{\tau})^T P_0 B_j, & \text{if } \tau > 0 \\
D_i^T D_j + B_i^T P_0 B_j, & \text{if } \tau=0 \\
D_i^T C A^{|\tau|-1} B_j + B_i^T P_0 A^{|\tau|} B_j, & \text{if } \tau<0.
\end{cases} \nonumber
\end{align}
}{
\begin{align}  \label{eq: cross terms} 
S_{ij}^{\tau} =
\begin{cases}
%B_i^T (A^{\tau-1})^T C^T D_j + B_i^T (A^{\tau})^T P_0 B_j, & \text{if } \tau > 0 \\
D_j^T C A^{\tau-1} B_i + B_j^T P_0 A^{\tau} B_i, & \text{if } \tau > 0 \\
D_i^T D_j + B_i^T P_0 B_j, & \text{if } \tau=0 \\
D_i^T C A^{|\tau|-1} B_j + B_i^T P_0 A^{|\tau|} B_j, & \text{if } \tau<0.
\end{cases} 
\end{align}
}

\begin{thm}
Let $G$ be a stable LTI system with $m$ inputs and $p$ outputs, and state space representation (\ref{eq: ss representation}).
Then, for the adjacency relation (\ref{eq: adjacency relation}), we have
\begin{align}	\label{eq: sensitivity MIMO general}
(\Delta^{m,p}_{2} G)^2 = \|GK\|_2^2 + \sum_{\substack{i,j=1 \\ i \neq j}}^m k_i k_j \left(\sup_{t_i,t_j \in \mathbb N} \left| S^{t_j - t_i}_{ij} \right|\right).
\end{align}
\end{thm}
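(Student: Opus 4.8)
The plan is to reduce the sensitivity to a static optimization over the difference of two adjacent signals, expand the resulting squared $\mathcal H_2$-type norm into a ``diagonal'' part that yields the $\|GK\|_2^2$ term and ``off-diagonal'' cross-correlation terms, identify the latter with the $S^{\tau}_{ij}$ via the state-space data, and then bound the optimization from above and below. First I would record that $\Delta^{m,p}_2 G = \sup\|G(u-u')\|_2$ and that, by (\ref{eq: adjacent signals}), the difference $v:=u-u'$ of adjacent signals ranges exactly over $v = \sum_{i=1}^m \alpha_i \delta_{t_i} e_i$ with $t_i\in\mathbb N$ and $|\alpha_i|\le k_i$. Writing $g_i$ for the impulse response of the $i$-th column $G_i$ of $G$, one has $(Gv)_t = \sum_{i}\alpha_i\,g_i(t-t_i)$, hence
\[
\|Gv\|_2^2 = \sum_{i,j=1}^m \alpha_i\alpha_j \sum_{t} g_i(t-t_i)^T g_j(t-t_j).
\]
The terms with $i=j$ contribute $\sum_i \alpha_i^2\|g_i\|_2^2 = \sum_i\alpha_i^2\|G_i\|_2^2$ by shift-invariance of the $\mathcal H_2$-norm, which is at most $\sum_i k_i^2\|G_i\|_2^2 = \|GK\|_2^2$, with equality when $|\alpha_i|=k_i$; each term with $i\ne j$ equals $\sum_s g_i(s)^T g_j(s-(t_j-t_i))$ and thus depends on $t_i,t_j$ only through $\tau:=t_j-t_i$.

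Second, I would compute this cross-correlation in closed form from the state-space data. Substituting $g_i(0)=D_i$, $g_i(s)=CA^{s-1}B_i$ for $s\ge 1$, $g_i(s)=0$ for $s<0$, and splitting the sum over $s$ according to the sign of $\tau$, the tail overlap is a geometric series in $A$ that collapses through the observability Gramian identity $\sum_{k\ge 0}(A^T)^k C^T C A^k = P_0$ (finite since $A$ is stable, and the unique PSD solution of $A^T P_0 A - P_0 + C^T C = 0$). This reproduces exactly the three cases of (\ref{eq: cross terms}): $\sum_s g_i(s)^T g_j(s-\tau) = S^{\tau}_{ij}$. Note each $S^{\tau}_{ij}$ is a scalar, so $S^{-\tau}_{ji}=S^{\tau}_{ij}$, and $S^{\tau}_{ij}\to 0$ as $|\tau|\to\infty$, so every $\sup_{t_i,t_j}|S^{\tau}_{ij}|$ is finite and attained at a finite lag. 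Combining, for any adjacent $u,u'$,
\[
\|G(u-u')\|_2^2 = \sum_{i=1}^m \alpha_i^2\|G_i\|_2^2 + \sum_{\substack{i,j=1\\ i\ne j}}^m \alpha_i\alpha_j\,S^{t_j-t_i}_{ij},
\]
and bounding $|\alpha_i|\le k_i$ together with $|S^{t_j-t_i}_{ij}|\le\sup_{t_i,t_j}|S^{t_j-t_i}_{ij}|$ term by term gives ``$\le$'' in (\ref{eq: sensitivity MIMO general}).

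For the reverse inequality I would build near-extremal adjacent signals: take $u'\equiv 0$, set $|\alpha_i|=k_i$ for all $i$, choose the impulse times so that each lag $t_j-t_i$ hits an argmax of $\tau\mapsto|S^{\tau}_{ij}|$, and choose $\mathrm{sign}(\alpha_i)$ so that every cross contribution $\alpha_i\alpha_j S^{t_j-t_i}_{ij}$ equals $k_ik_j|S^{t_j-t_i}_{ij}|$; the diagonal part is then $\|GK\|_2^2$ and the total attains the right-hand side of (\ref{eq: sensitivity MIMO general}). I expect this last step to be the main obstacle: the lags $\{t_j-t_i\}$ obey cocycle relations and the sign alignment is a rank-one condition on a sign matrix, so realizing all the individual suprema simultaneously is delicate — for $m=2$ it is immediate, but for larger $m$ the construction must carefully exploit the remaining freedom in which near-optimal lags and signs to use, possibly together with a ``well-separated impulses'' limiting argument in the spirit of the proof of Theorem \ref{thm: bound MIMO general}, so as not to lose anything.
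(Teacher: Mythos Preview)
Your approach is essentially the paper's: expand $\|G(u-u')\|_2^2$ into diagonal and off-diagonal pieces, identify the off-diagonal terms with the $S^{\tau}_{ij}$ via the state-space formulas and the observability Gramian, and bound term by term. The upper bound and the computation of $S^{\tau}_{ij}$ are correct and match the paper line for line.

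The gap you flag in the last paragraph is real, and you should not expect to close it. For $m\ge 3$ the pairwise lags must satisfy the cocycle relation $(t_j-t_i)+(t_k-t_j)=t_k-t_i$, so you cannot in general place every $t_j-t_i$ at an individual maximizer of $|S^{\tau}_{ij}|$; and the sign requirement $\alpha_i\alpha_j\,S^{t_j-t_i}_{ij}\ge 0$ for all $i\ne j$ forces $\mathrm{sign}(S^{t_j-t_i}_{ij})$ to be a rank-one $\pm 1$ matrix, which it need not be. The ``well-separated impulses'' idea from Theorem~\ref{thm: bound MIMO general} goes the wrong way here, since $|S^{\tau}_{ij}|\to 0$ as $|\tau|\to\infty$. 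The paper's own proof glosses over both obstructions: it simply asserts that ``the bound can be attained by taking $\alpha_i\in\{-k_i,k_i\}$, depending on the sign of $S_{ij}$,'' and never revisits the coupling of the times. So the right-hand side of (\ref{eq: sensitivity MIMO general}) is in general only an upper bound on $(\Delta^{m,p}_2 G)^2$; equality is guaranteed for $m\le 2$ but not beyond, and your hesitation is warranted.
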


\begin{proof}
In view of (\ref{eq: adjacent signals}), we have
\[
\Delta_2^{m,p} G = \sup_{|\alpha_i| \leq k_i, t_i \geq 0} \left \| \sum_{i=1}^m \alpha_i G \delta_{t_i} e_i \right \|_2.
\]
For $y_i=G\delta_{t_i} e_i$ and $y = \sum_{i=1}^m \alpha_i y_i$, we have
\begin{align*}
\|y\|^2_2 &= \sum_{t=0}^\infty \left| \sum_{i=1}^m \alpha_i y_{i,t} \right|^2 \\
&= \sum_{t=0}^\infty \sum_{i=1}^m \alpha_i^2 |y_{i,t}|^2 + \sum_{t=0}^\infty \; \sum_{\substack{i,j=1 \\ i \neq j}}^m \alpha_i \alpha_j y_{i,t}^T y_{j,t} \\
&\leq \|GK\|_2^2 +  \sum_{\substack{i,j=1 \\ i \neq j}}^m k_i k_j \left| \sum_{t=0}^\infty y_{i,t}^T y_{j,t} \right|,
\end{align*}
where $K = \text{diag}(k_1,\ldots,k_m)$ and the bound can be attained by taking $\alpha_i \in \{-k_i,k_i\}$, depending 
on the sign of $S_{ij} := \sum_{t=0}^\infty y_{i,t}^T y_{j,t}$.

Next, we derive the more explicit expression for $S_{ij}$ given in the theorem.
First,
\[
y_{i,t} = \begin{cases}
0, & t < t_i, \\
D_i, & t = t_i \\
C A^{t-t_i-1} B_i, & t > t_i.
\end{cases}
\]
%Assume without loss of generality that $t_1 \leq t_2 \leq \ldots \leq t_m$. 
%\textcolor{red}{[Review calculation below]}
Then if $t_i=t_j$, we find that
\[
S_{ij} = D_i^T D_j + B_i^T P_0 B_j,
\]
with $P_0 = \sum_{t=0}^\infty (A^t)^T C^T C A^t$ the observability Gramian.
If $t_i < t_j$, then
\[
S_{ij} = B_i^T (A^{t_j-t_i-1})^T C^T D_j + B_i^T (A^{t_j-t_i})^T P_0 B_j,
\]
which corresponds to the first case in (\ref{eq: cross terms}).
The case $t_i>t_j$ is symmetric.
\end{proof}

\subsection{Discussion}

In (\ref{eq: sensitivity MIMO general}), the maximization over inter-event times $t_i-t_j$ still needs 
to be performed and depends on the parameters of the specific system $G$.
%, since the result of the maximization of inter-event times $t_i-t_j$ depends on these parameters. 
This result could be used to evaluate carefully the amount of noise necessary in an output perturbation
mechanism, but unfortunately it seems too unwieldy at this point to be used in more advanced mechanism optimization 
schemes, such as the ones discussed in the next sections.

Still, the expression (\ref{eq: sensitivity MIMO general}) provides some intuition about the way the system 
dynamics influence its sensitivity. In particular, the second term in (\ref{eq: sensitivity MIMO general}) 
can give insight into the gap between the sensitivity and the lower bound in (\ref{eq: MIMO sensitivity bounds}).
Note from the expression of $S_{ij}^{\tau}$ in (\ref{eq: cross terms}) that one way to decrease
the sensitivity of $G$ is to increase sufficiently the required time $|t_i-t_j|$ between the events contributed by a 
single user, in order for $\|A^{|t_i-t_j|}\|$ to be small enough. Hence, a lower bound on inter-event
times in different streams could be introduced in the adjacency relation to reduce a system's sensitivity. 
This would weaken the differential privacy guarantee but help in the design of mechanisms with better performance. 
Another possibility would be to have a privacy-preserving mechanism simply ignore events from a given user 
as long as the lower bound on inter-event times is not reached.
%, provided these events can be easily and immediately identified by the system.

% !TEX root =  ../dpEventFiltering.tex

\section{Zero-Forcing MIMO Mechanisms}		\label{section: ZFM}

%\textcolor{red}{[Stability requirements of the pre-filters $G$ in the thms $6$ and $7$ needs to be clarified. Probably
%necessary to write the sensibility and mse in the first place, see the thms above.
%Paley-Wiener condition, etc.: see Kailath-Sayed-Hassibi book p. 186 for example. Do we need inverse filter to
%have finite H2 norm as well? See KSH chapter 6 on spectral factorization.
%]}

Using the sensitivity calculations of Section \ref{section: sensitivity}, we can now design differentially private 
mechanisms to approximate a given filter $F$, as discussed in Section \ref{section: generic scenario}. The 
mechanisms described in this section generalize to the MIMO case some ideas introduced in \cite{LeNy_DP_CDC12}. 
%Indeed, the general approximation architecture considered, described on Fig. \ref{fig: approximation setup}, is the same as for the SISO case.
The general approximation architecture considered is described on Fig. \ref{fig: approximation setup}.
On this figure, the system $H$ is of the form $H=FL$, with $L$ a left inverse of the pre-filter $G$. 
We call the resulting mechanisms Zero-Forcing Equalization (ZFE) mechanisms. 
Our goal is to design $G$ (and hence, $H$) so that the Mean Square Error (MSE) between $y$ and $\hat y$ 
on Fig. \ref{fig: approximation setup} is minimized. 
In order to obtain a differentially private signal $v$, we introduce a Gaussian white noise signal $w$ with variance 
proportional to the sensitivity of the filter $G$. %, following Theorem \ref{eq: basic DP mechanism}. 
It was shown in \cite{LeNy_DP_CDC12} for the SISO case that this setup can allow significant performance improvements compared to the
output-perturbation mechanism. Note that the latter is recovered when $G=F$ and $H$ is the identity.

\ifthenelse {\boolean{TwoColEq}} 
{
\begin{figure}
\includegraphics[width=\linewidth]{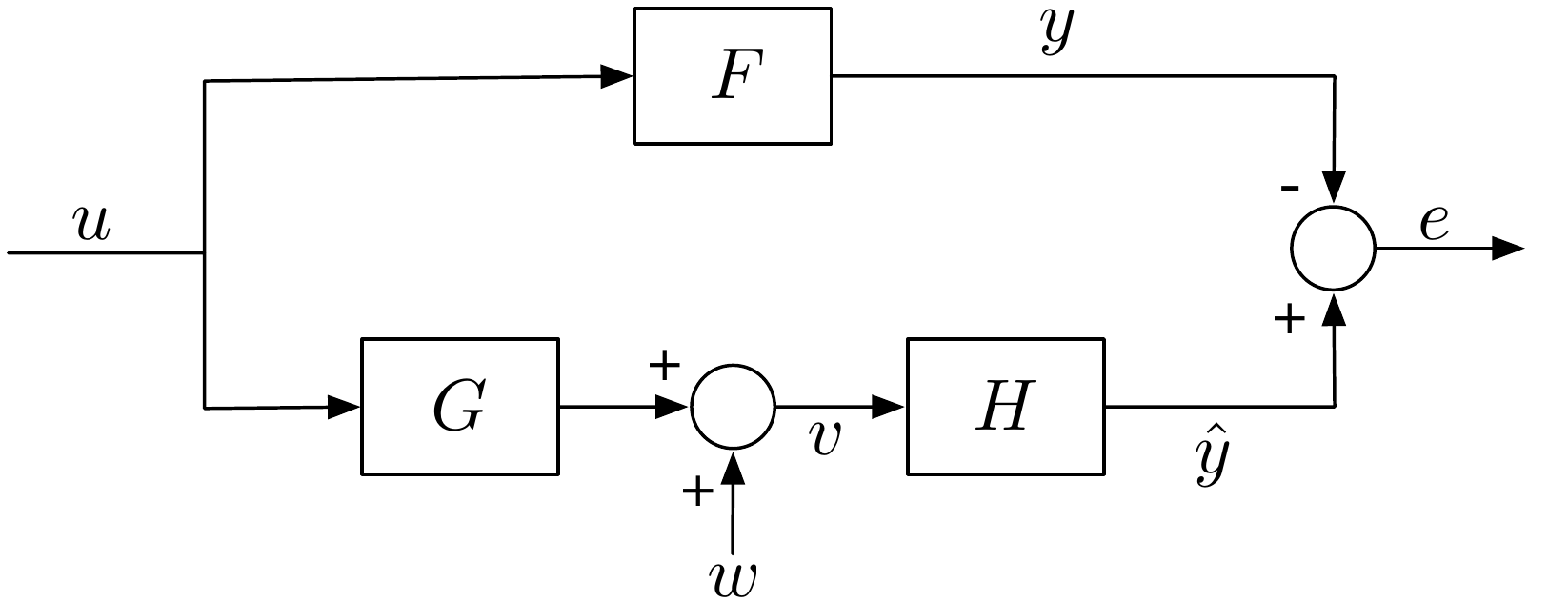}
\caption{Approximation setup for differentially private filtering. $w$ is a noise signal guaranteeing
that $v$ is a differentially private signal. The signal $\hat y$ is differentially private no matter what
the system $H$ is, see \cite[Theorem 1]{LeNyDP2012_journalVersion}.}
\label{fig: approximation setup}
\end{figure}
}{
\begin{figure}
\centering
\includegraphics[width=0.7\linewidth]{graphics/approxSetupFiltering.pdf}
\caption{Approximation setup for differentially private filtering. The signal $w$ is a noise signal guaranteeing
that $v$ is a differentially private signal. The signal $\hat y$ is differentially private no matter what
the system $H$ is, see \cite[Theorem 1]{LeNyDP2012_journalVersion}.}
\label{fig: approximation setup}
\end{figure}
}

\subsection{SIMO system approximation}

First, let us assume that $F$ on Fig. \ref{fig: approximation setup} is a SIMO filter, with $p$ outputs. 
Consider a first stage $G(z)=\text{col}(G_{1}(z),\ldots,G_{q}(z))$ taking the input signal $u$ and producing 
$q$ intermediate outputs that must be perturbed. 
%The second stage is taken to be $H=FG^{\dagger}$, with $G^\dagger$ a left-inverse of $G$, i.e.,
%satisfying 
The second stage is taken to be $H=FL$, with $L(z)=[L_{1}(z),\ldots,L_{q}(z)]$ a left-inverse of $G$, i.e., satisfying 
\[
%L_{1}(z) G_{1}(z) + \ldots L_{q}(z) G_{q}(z) = 1.
\sum_{i=1}^q L_i(z) G_i(z) = 1.
\]
Let us also define the transfer functions $M_i$, $i=1,\ldots,q,$ such that $M_i(z) = L_i(z^{-1})$, hence
$M_i(e^{j \omega}) = L_i(e^{j \omega})^*$, and thus in particular
\begin{align}
&|M_i(e^{j \omega})|^2 = |L_i(e^{j \omega})|^2, i=1,\ldots,q,  \label{eq: L property 1} \\
%&M_1(e^{j \omega})^* G_1(e^{j \omega}) + \ldots + M_q(e^{j \omega})^* G_q(e^{j \omega}) = 1. \label{eq: L property 2}
\text{and } & \sum_{i=1}^q M_i(e^{j \omega})^* G_i(e^{j \omega}) = 1.   \label{eq: L property 2}
\end{align}

From Theorem \ref{thm: SIMO sensitivity}, the sensitivity of the first stage 
for input signals that are adjacent according to (\ref{eq: adjacency relation}) is
$k_1 \|G\|_2$. 
%\begin{align*}
%(\Delta_2 G)^2 &= \|G\|_2^2 = \sum_{i=1}^k \| G_{i1} \|_2^2 \\
%&= \frac{1}{2 \pi} \int_{0}^{2\pi} \sum_{i=1}^k |G_{i1}(e^{j \omega})|^2 d \omega.
%\end{align*}
Hence, according to Theorem \ref{eq: basic DP mechanism}, adding a white Gaussian noise $w$ to the
output of $G$ with covariance matrix $k_1^2 \kappa_{\delta,\epsilon}^2 \|G\|_2^2 I_q$
is sufficient to ensure that the signal $v$ on Fig. \ref{fig: approximation setup} is differentially
private. The MSE for this mechanism can be expressed as
\begin{align*}
e^{ZFE}_{mse}(G) &= \lim_{T \to \infty} \frac{1}{T} \sum_{t=0}^\infty \mathbb E \left[ | (Fu)_t - (FLGu)_t - (FLw)_t |_2^2 \right] \\
e^{ZFE}_{mse}(G) &= \lim_{T \to \infty} \frac{1}{T} \sum_{t=0}^\infty \mathbb E \left[ | (FLw)_t |_2^2 \right] \\
e^{ZFE}_{mse}(G) &= k_1^2 \kappa_{\delta,\epsilon}^2 \|G\|_2^2 \|FL\|_2^2.
\end{align*}
We are thus led to consider the minimization of $\|FL\|_2^2  \|G\|_2^2$ over the pre-filters $G$. 
%Recall in the following calculation that $| \cdot |$ is also used to denote the Euclidean norm in $\mathbb C^d$, for any $d$. 
We have
\ifthenelse {\boolean{TwoColEq}} 
{
\begin{align*}
\|&FL\|_2^2  \|G\|_2^2 \\
=&\frac{1}{2\pi} \int_{-\pi}^\pi \Tr (L^*(e^{j \omega}) F^*(e^{j \omega}) F(e^{j \omega}) L(e^{j \omega})) d \omega \times \\
&\frac{1}{2 \pi} \int_{-\pi}^\pi \Tr (G^*(e^{j \omega}) G(e^{j \omega})) d \omega \\
=& \frac{1}{2\pi} \int_{-\pi}^\pi |F(e^{j \omega})|_2^2 \; |L(e^{j \omega})|^2 d \omega 
\times \frac{1}{2\pi} \int_{-\pi}^\pi |G(e^{j \omega})|^2 d \omega \\
=& \frac{1}{2\pi} \int_{-\pi}^\pi |F(e^{j \omega})|_2^2 \; |M(e^{j \omega})|^2 d \omega 
\times \frac{1}{2\pi} \int_{-\pi}^\pi |G(e^{j \omega})|^2 d \omega,
\end{align*}
}{
\begin{align*}
\|FL\|_2^2  \|G\|_2^2 
&=\frac{1}{2\pi} \int_{-\pi}^\pi \Tr (L^*(e^{j \omega}) F^*(e^{j \omega}) F(e^{j \omega}) L(e^{j \omega})) d \omega \times 
\frac{1}{2 \pi} \int_{-\pi}^\pi \Tr (G^*(e^{j \omega}) G(e^{j \omega})) d \omega \\
&= \frac{1}{2\pi} \int_{-\pi}^\pi |F(e^{j \omega})|_2^2 \; |L(e^{j \omega})|_2^2 d \omega 
\times \frac{1}{2\pi} \int_{-\pi}^\pi |G(e^{j \omega})|_2^2 d \omega \\
&= \frac{1}{2\pi} \int_{-\pi}^\pi |F(e^{j \omega})|_2^2 \; |M(e^{j \omega})|_2^2 d \omega 
\times \frac{1}{2\pi} \int_{-\pi}^\pi |G(e^{j \omega})|_2^2 d \omega,
\end{align*}
}
where in the last equality we used (\ref{eq: L property 1}). Now consider the following inner product on the space of
$2\pi$-periodic functions with values in $\mathbb C^q$ 
\[
\langle f, g \rangle = \frac{1}{2 \pi} \int_{-\pi}^\pi f(e^{j \omega})^* g(e^{j \omega}) d \omega.
\]
By the Cauchy-Schwarz inequality for this inner product applied to the functions 
$\omega \mapsto |F(e^{j \omega})|_2 M(e^{j \omega})$ and 
$\omega \mapsto G(e^{j \omega})$, we obtain the following bound
\ifthenelse {\boolean{TwoColEq}} 
{
\begin{align*}
&\|FL\|_2^2  \|G\|_2^2 \geq \\
&\left( \frac{1}{2 \pi} \int_{-\pi}^\pi |F(e^{j \omega})|_2 \sum_{i=1}^q M_i^*(e^{j \omega}) G_i(e^{j \omega}) d \omega \right)^2,
\end{align*}
}{
\begin{align*}
\|FL\|_2^2  \|G\|_2^2 \geq 
\left( \frac{1}{2 \pi} \int_{-\pi}^\pi |F(e^{j \omega})|_2 \sum_{i=1}^q M_i^*(e^{j \omega}) G_i(e^{j \omega}) d \omega \right)^2,
\end{align*}
}
i.e., using (\ref{eq: L property 2}),
\[
\|FL\|_2^2  \|G\|_2^2 \geq \left( \frac{1}{2 \pi} \int_{-\pi}^\pi |F(e^{j \omega})|_2 \, d \omega \right)^2.
\]
Moreover, the two sides in the Cauchy-Schwarz inequality are equal, i.e., the bound is attained, if
\[
|F(e^{j \omega})|_2 M(e^{j \omega}) = G(e^{j \omega}).
\]
Note that this condition does not depend on $q$. Hence we can simply take $q=1$, and $L(z) = 1/G(z)$,
to get
\begin{align}
|F(e^{j \omega})|_2 \; L^*(e^{j \omega}) = G(e^{j \omega}) \nonumber \\
\text{i.e., } |G(e^{j \omega})|^2 = |F(e^{j \omega})|_2.	\label{eq: SIMO optimization result}
\end{align}
Finding $G$ SISO satisfying (\ref{eq: SIMO optimization result}) is a spectral factorization problem.
We can choose $G$ stable and minimum phase, so that its inverse $L$ is also stable. The following
theorem summarizes the preceding discussion and generalizes \cite[Theorem 8]{LeNyDP2012_journalVersion}.

\begin{thm}	\label{thm: error for ZFE mechanism}
Let $F$ be a SIMO LTI system with $\|F\|_2 < \infty$.
For any stable LTI system $G$, %$\mathcal G_1$, 
\begin{equation}	\label{eq: ZFE bound in thm}
%e^{ZFE}_{mse} (\mathcal G_1) 
e^{ZFE}_{mse}(G)
\geq k_1^2 \kappa_{\delta,\epsilon}^2 \left(\frac{1}{2 \pi} \int_{-\pi}^\pi |F(e^{j\omega})|_2 \; d \omega \right)^2.
\end{equation}
%where $|F(e^{j\omega})| = \left( \sum_{i=1}^p |F_{i1}(e^{j\omega})|^2 \right)^{1/2}$.
If moreover $F$ satisfies the Paley-Wiener condition 
$\frac{1}{2 \pi} \int_{-\pi}^\pi \ln |F(e^{j \omega})|_2 \; \, d \omega > - \infty$, 
this lower bound on the mean square error of the ZFE mechanism can be attained by some stable minimum 
phase SISO system $G$ such that $|G(e^{j \omega})|^2 = |F(e^{j \omega})|_2$, for almost every 
$\omega \in [-\pi,\pi)$.
\end{thm}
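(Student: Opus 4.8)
The plan is to follow the reduction already sketched in the text and supply the missing justifications. First I would fix an arbitrary stable $G = \text{col}(G_1,\ldots,G_q)$ together with a left inverse $L = [L_1,\ldots,L_q]$, so that $LG = \sum_{i=1}^q L_i G_i = 1$. Referring to Fig.~\ref{fig: approximation setup}, the published signal is then $\hat y = H(Gu+w) = FLGu + FLw = Fu + FLw$, hence $y - \hat y = -FLw$; and since $w$ is the $q$-dimensional white Gaussian noise of intensity $k_1^2\kappa_{\delta,\epsilon}^2\|G\|_2^2 I_q$ that Theorems~\ref{thm: SIMO sensitivity} and~\ref{eq: basic DP mechanism} show is sufficient to make $v$ (and hence $\hat y$) $(\epsilon,\delta)$-differentially private, the standard formula for the stationary power of white noise filtered by an LTI system gives $e^{ZFE}_{mse}(G) = k_1^2\kappa_{\delta,\epsilon}^2\,\|G\|_2^2\,\|FL\|_2^2$. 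Everything therefore reduces to lower-bounding, and then minimizing, the product $\|G\|_2^2\,\|FL\|_2^2$ over pre-filters $G$ and their left inverses $L$.

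For the lower bound I would pass to the frequency domain. Since $F$ is a column (SIMO), $\Tr\big(L^*(e^{j\omega})F^*(e^{j\omega})F(e^{j\omega})L(e^{j\omega})\big) = |F(e^{j\omega})|_2^2\,|L(e^{j\omega})|_2^2$, so $\|FL\|_2^2 = \frac{1}{2\pi}\int_{-\pi}^{\pi}|F(e^{j\omega})|_2^2\,|L(e^{j\omega})|_2^2\,d\omega$ while $\|G\|_2^2 = \frac{1}{2\pi}\int_{-\pi}^{\pi}|G(e^{j\omega})|_2^2\,d\omega$. Introducing $M_i(z) := L_i(z^{-1})$, so that $|M_i| = |L_i|$ on the unit circle and $\sum_i M_i(e^{j\omega})^* G_i(e^{j\omega}) \equiv 1$, and using the inner product $\langle f,g\rangle = \frac{1}{2\pi}\int_{-\pi}^{\pi} f(e^{j\omega})^* g(e^{j\omega})\,d\omega$ on $\mathbb C^q$-valued $2\pi$-periodic functions, the product is exactly $\langle |F|_2 M,\,|F|_2 M\rangle\,\langle G, G\rangle$. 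Cauchy--Schwarz then yields $\|G\|_2^2\,\|FL\|_2^2 \ge \big|\langle |F|_2 M, G\rangle\big|^2 = \big(\frac{1}{2\pi}\int_{-\pi}^{\pi}|F(e^{j\omega})|_2\,d\omega\big)^2$, where the integral is finite because $|F|_2^2$ is integrable on a finite measure space, hence so is $|F|_2$. Multiplying by $k_1^2\kappa_{\delta,\epsilon}^2$ gives the first claim~(\ref{eq: ZFE bound in thm}).

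For attainment, equality in Cauchy--Schwarz forces $G$ to be proportional to $|F|_2 M$; since this condition does not involve $q$, I would take $q=1$ and $L = 1/G$, so that $M(e^{j\omega}) = L(e^{j\omega})^* = 1/\overline{G(e^{j\omega})}$ and the equality condition collapses to the scalar spectral-factorization equation $|G(e^{j\omega})|^2 = |F(e^{j\omega})|_2$ for a.e.\ $\omega$. The remaining, and only genuinely nontrivial, step is to solve this equation: its right-hand side is a nonnegative integrable function, so under the Paley--Wiener hypothesis $\frac{1}{2\pi}\int_{-\pi}^{\pi}\ln|F(e^{j\omega})|_2\,d\omega > -\infty$ the spectral factorization theorem produces a stable, causal, minimum-phase $G$ with $|G(e^{j\omega})|^2 = |F(e^{j\omega})|_2$ a.e.; minimum phase guarantees that $L = 1/G$ is itself stable and causal, so the resulting ZFE mechanism is realizable and attains the lower bound. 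The main obstacle is precisely this invocation --- confirming that log-integrability is exactly what spectral factorization requires and that the canonical outer factor has a stable inverse --- while the rest is Parseval and Cauchy--Schwarz bookkeeping entirely parallel to the SISO argument of \cite[Theorem 8]{LeNyDP2012_journalVersion}.
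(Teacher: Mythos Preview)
Your proposal is correct and follows essentially the same approach as the paper: the derivation of $e^{ZFE}_{mse}(G)=k_1^2\kappa_{\delta,\epsilon}^2\|G\|_2^2\|FL\|_2^2$, the introduction of $M_i(z)=L_i(z^{-1})$, the Cauchy--Schwarz argument on $\mathbb C^q$-valued periodic functions, the reduction to $q=1$, and the appeal to spectral factorization under the Paley--Wiener condition all mirror the paper's development exactly. Your added remarks (integrability of $|F|_2$, stability of $1/G$ from minimum phase) are details the paper leaves implicit but are consistent with its argument.
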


\begin{proof}
The main argument for the proof was given before the theorem. 
Since $|F(e^{j \omega})|_2$ is a nonnegative function on the unit circle, if it satisfies the Paley-Wiener condition, 
it has indeed a minimum phase spectral factor $G$ satisfying (\ref{eq: SIMO optimization result}) almost everywhere \cite[p. 242]{Poor94_SPbook}.
%and thus the performance bound can be attained.
%\textcolor{red}{[Currently here. To finish]}.
\end{proof}

\subsection{MIMO system approximation}		\label{section: diagonal pre-filter optimization}

Let us now assume that $F$ has $m>1$ inputs. We write $F(z) = [F_1(z),\ldots,F_m(z)]$,
with $F_i$ a $p \times 1$ transfer matrix.
In this case, in view of the complicated expression (\ref{eq: sensitivity MIMO general}) 
for the sensitivity of a general MIMO filter, we only provide a subpotimal ZFE mechanism, 
together with a comparison between the performance of our mechanism and the optimal ZFE mechanism.
The idea is to restrict our attention to pre-filters $G$ that are $m \times m$ and diagonal, for
which the sensitivity is given in Theorem \ref{thm: diagonal system sensitivity}. The problem
of optimizing the diagonal pre-filters, using the architecture depicted on Fig. \ref{fig: approximation setup MIMO},
 can in fact be seen as designing $m$ SIMO mechanisms.

\ifthenelse {\boolean{TwoColEq}} 
{
\begin{figure}
\includegraphics[width=\linewidth]{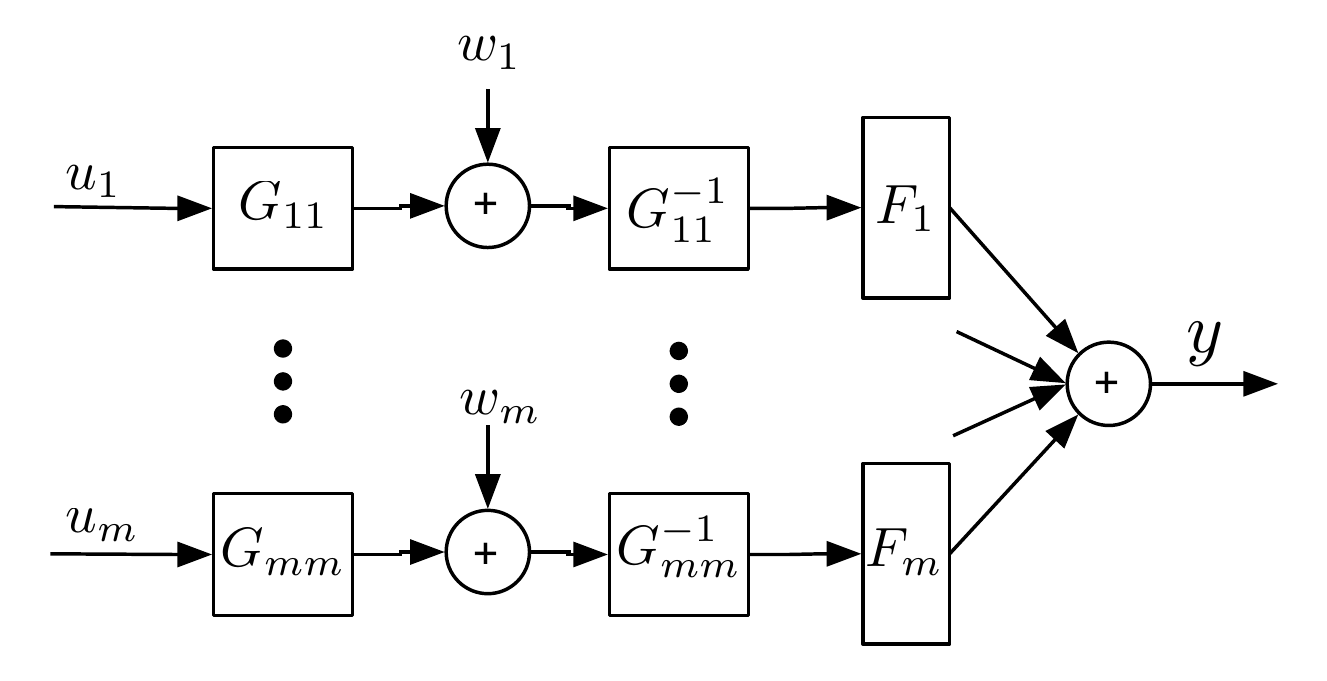}
\caption{(Suboptimal) ZFE mechanism for a MIMO system $Fu = \sum_{i=1}^m F_i u_i$, and a diagonal pre-filter 
$G(z) = \text{diag}(G_{11}(z),\ldots,G_{mm}(z))$. Here $F_i(z)$ is a $p \times 1$ transfer matrix, for
$i=1,\ldots,m$. The signal $w$ is a white Gaussian noise with covariance matrix 
$\kappa_{\delta,\epsilon}^2 \|KG\|_2^2 I_m$.}
\label{fig: approximation setup MIMO}
\end{figure}
}{
\begin{figure}
\centering
\includegraphics[width=0.6\linewidth]{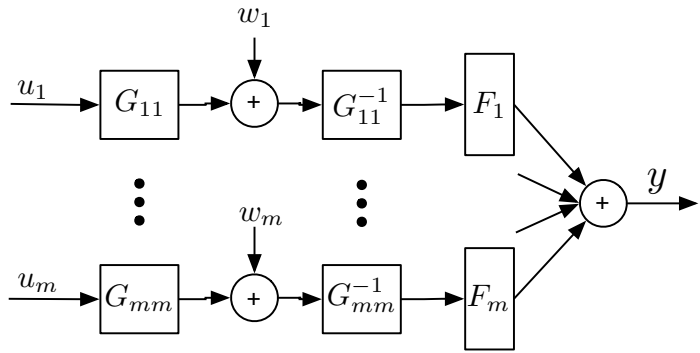}
\caption{(Suboptimal) ZFE mechanism for a MIMO system $Fu = \sum_{i=1}^m F_i u_i$, and a diagonal pre-filter 
$G(z) = \text{diag}(G_{11}(z),\ldots,G_{mm}(z))$. Here $F_i(z)$ is a $p \times 1$ transfer matrix, for
$i=1,\ldots,m$. The signal $w$ is a white Gaussian noise with covariance matrix 
$\kappa_{\delta,\epsilon}^2 \|KG\|_2^2 I_m$.}
\label{fig: approximation setup MIMO}
\end{figure}
}

\subsubsection{Diagonal Pre-filter Optimization}

If $G$ is diagonal, then according to Theorem \ref{thm: diagonal system sensitivity} its squared sensitivity 
is $(\Delta^{m,m}_2 G)^2 = \|KG\|_2^2 = \sum_{i=1}^m \|k_i G_{ii}\|_2^2$, with $K = \text{diag}(k_1,\ldots,k_m)$.
%Assume moreover that each $G_{ii}(z)$ is invertible. 
Following the same reasoning as in the previous subsection, the MSE for this mechanism can be expressed as
\begin{align}	\label{eq: performance ZFE}
e^{ZFE}_{mse}(G) &= \kappa_{\delta,\epsilon}^2 \|KG\|_2^2 \|FG^{-1}\|_2^2,
\end{align}
with $G^{-1}(z) = \text{diag}(G_{11}(z)^{-1},\ldots,G_{mm}(z)^{-1})$. Now remark that
\[
\|FG^{-1}\|_2^2 = \frac{1}{2 \pi} \int_{-\pi}^\pi \sum_{i=1}^m \frac{|F_i(e^{j \omega})|_2^2}{|G_{ii}(e^{j \omega})|^2} d \omega.
\]
Hence from the Cauchy-Schwarz inequality again, we obtain the lower bound
\begin{align*}
e^{ZFE}_{mse}(G) &\geq \kappa_{\delta,\epsilon}^2 \left( \frac{1}{2 \pi} \int_{-\pi}^\pi \sum_{i=1}^m 
\frac{|F_i(e^{j \omega})|_2}{|G_{ii}(e^{j \omega})|} |k_i G_{ii}(e^{j \omega})| d \omega \right)^2 \\
e^{ZFE}_{mse}(G) &\geq \kappa_{\delta,\epsilon}^2 \left( \frac{1}{2 \pi} \int_{-\pi}^\pi \sum_{i=1}^m k_i |F_i(e^{j \omega})|_2 \; d \omega \right)^2,
\end{align*}
and this bound is attained if
\begin{align*}
k_i |G_{ii}(e^{j \omega})| &= \frac{|F_i(e^{j \omega})|_2}{|G_{ii}(e^{j \omega})|},  \\
\text{i.e. } \; k_i |G_{ii}(e^{j \omega})|^2 &= |F_i(e^{j \omega})|_2, \; i=1,\ldots,m.
\end{align*}
In other words, the best diagonal pre-filter for the MIMO ZFE mechanism can be obtained from 
$m$ spectral factorizations of the functions $\omega \mapsto \frac{1}{k_i} |F_i(e^{j \omega})|_2$,
$i=1,\ldots,m$. 

\begin{thm}	\label{thm: error for ZFE mechanism MIMO diagonal}
Let $F=[F_1,\ldots,F_m]$ be a MIMO LTI system with $\|F\|_2 < \infty$.
We have, for any stable diagonal filter $G(z)=\text{diag}(G_{11}(z),\ldots,G_{mm}(z))$, %$\mathcal G_1$, 
\begin{equation}	\label{eq: ZFE bound in thm - diagonal}
%e^{ZFE}_{mse} (\mathcal G_1) 
e^{ZFE}_{mse}(G)
\geq \kappa_{\delta,\epsilon}^2 \left( \frac{1}{2 \pi} \int_{-\pi}^\pi \sum_{i=1}^m k_i |F_i(e^{j \omega})|_2 \; d \omega \right)^2.
\end{equation}
If moreover each $F_i$ satisfies the Paley-Wiener condition 
$\frac{1}{2 \pi} \int_{-\pi}^\pi \ln |F_i(e^{j \omega})|_2 \, d \omega > - \infty$, 
this lower bound on the mean-squared error of the ZFE mechanism can be attained by some stable minimum 
phase systems $G_{ii}$ such that $|G_{ii}(e^{j \omega})|^2 = \frac{1}{k_i} |F_i(e^{j \omega})|_2$, for almost every 
$\omega \in [-\pi,\pi)$.
\end{thm}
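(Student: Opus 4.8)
The plan is to follow the same route as the SIMO case in Theorem~\ref{thm: error for ZFE mechanism}; in fact most of the argument was already carried out in the paragraphs preceding the statement, and the only genuinely new point is that the Cauchy--Schwarz step is performed jointly over the $m$ input channels. First I would start from the MSE formula (\ref{eq: performance ZFE}), $e^{ZFE}_{mse}(G) = \kappa_{\delta,\epsilon}^2 \|KG\|_2^2 \,\|FG^{-1}\|_2^2$, and expand both $\mathcal H_2$ norms in the frequency domain using that $G$ is diagonal, so that $\|KG\|_2^2 = \frac{1}{2\pi}\int_{-\pi}^{\pi} \sum_{i=1}^m k_i^2 |G_{ii}(e^{j\omega})|^2\, d\omega$ and $\|FG^{-1}\|_2^2 = \frac{1}{2\pi}\int_{-\pi}^{\pi} \sum_{i=1}^m |F_i(e^{j\omega})|_2^2 / |G_{ii}(e^{j\omega})|^2\, d\omega$.

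For the lower bound I would apply the Cauchy--Schwarz inequality in $L^2([-\pi,\pi];\mathbb R^m)$ to the two vector-valued functions $\omega \mapsto \text{col}\big(k_1|G_{11}(e^{j\omega})|,\dots,k_m|G_{mm}(e^{j\omega})|\big)$ and $\omega \mapsto \text{col}\big(|F_1(e^{j\omega})|_2/|G_{11}(e^{j\omega})|,\dots,|F_m(e^{j\omega})|_2/|G_{mm}(e^{j\omega})|\big)$. Their inner product collapses to $\frac{1}{2\pi}\int_{-\pi}^{\pi}\sum_{i=1}^m k_i |F_i(e^{j\omega})|_2\, d\omega$, which is exactly the right-hand side of (\ref{eq: ZFE bound in thm - diagonal}). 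Equality in Cauchy--Schwarz forces the two functions to be proportional by one common constant; since the product $\|KG\|_2^2\|FG^{-1}\|_2^2$ is invariant under $G\mapsto cG$ with $c>0$, that constant can be normalized to $1$, which is precisely the condition $k_i|G_{ii}(e^{j\omega})|^2 = |F_i(e^{j\omega})|_2$ for all $i$ and almost every $\omega$.

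For the attainment claim I would invoke spectral factorization channel by channel: for each $i$ the function $\omega\mapsto \frac1{k_i}|F_i(e^{j\omega})|_2$ is nonnegative on the unit circle and integrable (since $\|F\|_2 < \infty$ makes each $|F_i|_2$ lie in $L^2$, hence in $L^1$, on $[-\pi,\pi]$), so under the Paley--Wiener condition it has a stable, minimum-phase spectral factor $G_{ii}$ with $|G_{ii}(e^{j\omega})|^2 = \frac1{k_i}|F_i(e^{j\omega})|_2$ a.e.\ \cite[p.~242]{Poor94_SPbook}. The minimum-phase property makes $G_{ii}^{-1}$ stable, so $G^{-1} = \text{diag}(G_{11}^{-1},\dots,G_{mm}^{-1})$ is a stable causal left-inverse of $G$ and the mechanism of Fig.~\ref{fig: approximation setup MIMO} is well defined; substituting this $G$ into the MSE formula attains the bound, and $\|FG^{-1}\|_2 < \infty$ follows because both $\|KG\|_2^2$ and the product $\|KG\|_2^2\|FG^{-1}\|_2^2$ are finite.

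The step I expect to require the most care is not the inequality itself but the admissibility check at the end: verifying that the optimal pre-filter yields a legitimate mechanism, i.e.\ that $G^{-1}$ is a stable left-inverse of $G$ and that $H=FG^{-1}$ is causal with finite $\mathcal H_2$ norm. This is where the minimum-phase choice of spectral factor and the hypothesis $\|F\|_2<\infty$ are used; everything else decouples into $m$ copies of the one-dimensional computation already established for Theorem~\ref{thm: error for ZFE mechanism}.
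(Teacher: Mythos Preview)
Your proposal is correct and follows essentially the same route as the paper: start from the MSE expression \eqref{eq: performance ZFE}, expand both $\mathcal H_2$ norms as integrals of sums over the $m$ channels, apply Cauchy--Schwarz in $L^2([-\pi,\pi];\mathbb R^m)$ to the two vector-valued functions you wrote down, read off the equality condition $k_i|G_{ii}|^2=|F_i|_2$, and invoke the Paley--Wiener spectral factorization channel by channel. Your added remarks on admissibility of the minimum-phase inverse are a welcome elaboration but do not depart from the paper's argument.
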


\begin{rem}
Note that the integrand on the right-hand side of (\ref{eq: ZFE bound in thm - diagonal}) can be written
\[
\sum_{i=1}^m k_i |F_i(e^{j \omega})|_2 := \| F(e^{j \omega}) K \|_{2,1},
\]
where $K = \text{diag}(k_1,\ldots,k_m)$ as usual and $\| \cdot \|_{2,1}$ is the so-called $L_{2,1}$ or $R_1$ matrix norm,
and appears in \cite{Ding:ML06:R1PCA} for example.
\end{rem}

\subsubsection{Comparison with Non-Diagonal Pre-filters}

For $F$ a general MIMO system, it is possible that we could achieve a better performance with a ZFE mechanism
where $G$ is not diagonal, i.e., by combining the inputs before adding the privacy-preserving noise. 
To provide a better understanding of how much could potentially be gained by carrying out this more involved 
optimization over general pre-filters $G$ rather than just diagonal pre-filters, the following Theorem provides a
a lower bound on the MSE achievable by \emph{any} ZFE mechanism. 
%Here we provide a lower bound on the MSE that one could potentially achieve by carrying out this more involved optimization over
%general pre-filters $G$ rather than just diagonal pre-filters. 
%To simplify the discussion, we assume $k_1=\ldots=k_m=1$.
%
%To simplify the discussion, we assume that $k_1, \ldots k_m$ are strictly positive, and 
%We denote as usual $K = \text{diag}(k_1,\ldots,k_m)$.

\begin{thm}	\label{thm: error for ZFE mechanism MIMO general}
Let $F=[F_1,\ldots,F_m]$ be a MIMO LTI system with $\|F\|_2 < \infty$.
We have, for any $m \times m$ stable filter $G(z)$, %$\mathcal G_1$, 
with left inverse $L$ so that $L(z) G(z) = I$,
\begin{equation}	\label{eq: ZFE bound in thm - diagonal}
e^{ZFE}_{mse}(G) \geq
\kappa_{\delta,\epsilon}^2 \left(  \frac{1}{2 \pi} \int_{- \pi}^{\pi}  \|F(e^{j \omega}) K\|_* d \omega \right)^2,
\end{equation}
where $\|F(e^{j \omega}) K \|_*$ denotes the nuclear norm of the matrix 
$F(e^{j \omega}) K$ (sum of singular values).
\end{thm}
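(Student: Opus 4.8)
The plan is to reduce the claimed bound, through two applications of the Cauchy--Schwarz inequality, to a pointwise matrix inequality on the unit circle, which I would then establish with a singular value decomposition argument.

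First I would record the performance of the ZFE mechanism built from a general (non-diagonal) square pre-filter. Generalizing (\ref{eq: performance ZFE}) to an $m \times m$ stable $G$ with two-sided inverse $L = G^{-1}$ and isotropic privacy noise $w$ of covariance $\kappa_{\delta,\epsilon}^2 (\Delta^{m,m}_2 G)^2 I_m$, we have $\hat y = F G^{-1}(Gu + w) = Fu + F G^{-1} w$, so
\[
e^{ZFE}_{mse}(G) = \kappa_{\delta,\epsilon}^2 \, (\Delta^{m,m}_2 G)^2 \, \|FG^{-1}\|_2^2 .
\]
By the lower bound of Theorem \ref{thm: bound MIMO general}, $\Delta^{m,m}_2 G \geq \|GK\|_2$, so it suffices to bound $\|GK\|_2^2 \, \|FG^{-1}\|_2^2$ from below; if this is infinite the theorem is vacuous, so I may assume both $\mathcal H_2$ norms finite (then also $G(e^{j\omega})^{-1}$ exists for almost every $\omega$, since $\det G$ is a non-identically-zero rational function). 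Passing to the frequency domain,
\[
\|GK\|_2^2 = \frac{1}{2\pi}\int_{-\pi}^\pi \|G(e^{j\omega})K\|_F^2 \, d\omega,
\]
\[
\|FG^{-1}\|_2^2 = \frac{1}{2\pi}\int_{-\pi}^\pi \|F(e^{j\omega})\,G(e^{j\omega})^{-1}\|_F^2 \, d\omega,
\]
where $\|\cdot\|_F$ is the Frobenius norm. A Cauchy--Schwarz inequality for the scalar inner product $\frac{1}{2\pi}\int_{-\pi}^\pi f(\omega) g(\omega)\, d\omega$, applied to $\omega \mapsto \|G(e^{j\omega})K\|_F$ and $\omega \mapsto \|F(e^{j\omega})\,G(e^{j\omega})^{-1}\|_F$, then reduces the task to the pointwise estimate
\[
\|G(e^{j\omega})K\|_F \; \|F(e^{j\omega})\,G(e^{j\omega})^{-1}\|_F \;\geq\; \|F(e^{j\omega})K\|_*
\]
for almost every $\omega$.

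Fixing $\omega$ and abbreviating $A := G(e^{j\omega})K$ (an invertible $m \times m$ matrix, as $K \succ 0$) and $B := F(e^{j\omega})K$ (a $p \times m$ matrix), the pointwise estimate becomes the algebraic inequality $\|A\|_F \,\|B A^{-1}\|_F \geq \|B\|_*$. I would prove it as follows: take a singular value decomposition $B = U\Sigma V^*$ with $U,V$ unitary and $\sigma_i := \sigma_i(B)$ the singular values; since the Frobenius norm is unitarily invariant, setting $C := AV$ reduces the claim to $\|C\|_F \,\|\Sigma C^{-1}\|_F \geq \sum_i \sigma_i = \|B\|_*$. Writing $c_j$ for the columns of $C$ and $r_i$ for the rows of $C^{-1}$, the identity $C^{-1} C = I$ gives $r_i c_i = 1$, hence $\|r_i\| \, \|c_i\| \geq 1$ by Cauchy--Schwarz. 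Therefore $\|\Sigma C^{-1}\|_F^2 = \sum_i \sigma_i^2 \|r_i\|^2 \geq \sum_i \sigma_i^2 / \|c_i\|^2$ while $\|C\|_F^2 \geq \sum_i \|c_i\|^2$ (all sums over $i = 1, \ldots, \min(p,m)$), and a final Cauchy--Schwarz inequality yields $\|C\|_F^2 \, \|\Sigma C^{-1}\|_F^2 \geq (\sum_i \sigma_i)^2$. Recombining the three inequalities proves the theorem.

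The frequency-domain rewriting of the $\mathcal H_2$ norms and the small amount of bookkeeping according to whether $p \geq m$ or $p < m$ are routine; the one genuinely substantive step is the matrix inequality $\|A\|_F\,\|B A^{-1}\|_F \geq \|B\|_*$, and the crucial ideas there are to rotate away the singular vectors of $B$ so that only $\Sigma$ remains and then to pair the $i$-th column of $C$ against the $i$-th row of $C^{-1}$ through $C^{-1}C = I$. I would also note that, at least when $p \geq m$, all three Cauchy--Schwarz inequalities can be turned into equalities (by letting $G(e^{j\omega})$ diagonalize $F(e^{j\omega})K$ along its singular vectors and then balancing magnitudes as in Theorem \ref{thm: error for ZFE mechanism MIMO diagonal}), so the bound is essentially the best possible; this also explains why the nuclear norm here replaces the $L_{2,1}$ norm that appears for diagonal pre-filters.
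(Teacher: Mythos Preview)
Your argument is correct, but it takes a somewhat different route from the paper. The paper introduces at each frequency the positive semidefinite square root $A(e^{j\omega})$ of $(FK)^*(FK)$, rewrites $\|FL\|_2^2 = \frac{1}{2\pi}\int \Tr(A \tilde L \tilde L^* A)\,d\omega$, and then applies a \emph{single} Cauchy--Schwarz inequality for the matrix inner product $\langle M,N\rangle = \frac{1}{2\pi}\int \Tr(M^*N)\,d\omega$, pairing $\tilde G := GK$ against $\tilde L^* A$; the relation $\tilde L\tilde G = I$ collapses the cross term to $\frac{1}{2\pi}\int \Tr A\,d\omega = \frac{1}{2\pi}\int \|FK\|_*\,d\omega$ in one stroke. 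Your version instead first applies a scalar Cauchy--Schwarz over $\omega$ to the Frobenius norms, reducing to the pointwise matrix inequality $\|A\|_F\,\|BA^{-1}\|_F \geq \|B\|_*$, which you then establish via an SVD and a column/row pairing argument. The paper's approach is shorter and avoids the SVD bookkeeping for $p\gtrless m$; your approach has the merit of isolating the pointwise inequality (a fact of independent interest) and making transparent exactly what is needed for equality, which is why your remark on achievability when $p\geq m$ falls out naturally. Note, incidentally, that your pointwise inequality admits the paper's one--line proof too: with $C:=A^{-*}(B^*B)^{1/2}$ one has $\|C\|_F=\|BA^{-1}\|_F$ and $\Tr(A^*C)=\Tr((B^*B)^{1/2})=\|B\|_*$, so Frobenius Cauchy--Schwarz gives the result directly.
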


The lower bound \eqref{lower bound: general case} on the achievable MSE with a general pre-filter in a ZFE mechanism 
should be compared to the performance (\ref{eq: ZFE bound in thm - diagonal}) that can actually be achieved with diagonal pre-filters.
Note that these bounds co\"incide for $m=1$. For $m > 1$, the gap depends on the difference between the integrals of the
$L_{2,1}$ norm and the nuclear norm of $F(e^{j \omega}) K$.

\begin{proof}
%Hence, consider a general $m \times m$ pre-filter $G$ with left inverse $L$, so that $L(z) G(z) = I$. 
We denote as usual $K = \text{diag}(k_1,\ldots,k_m)$, and define $\tilde G = GK$ and $\tilde L = K^{-1} L$, 
so that we again have $\tilde L \tilde G = I$. Let $\tilde F = F K$.
% since we are looking at a lower bound for a family of mechanisms, we can restrict our attention to filters
% G for which L exists.
With the lower bound of Theorem \ref{thm: bound MIMO general}, designing a 
ZFE mechanism based on sensitivity as above would require adding a noise with variance at least 
$\kappa_{\delta,\epsilon}^2 \| \tilde G \|_2^2$.
This would lead to an MSE at least equal to $\kappa_{\delta,\epsilon}^2 \| \tilde G\|_2^2 \| \tilde F \tilde L\|_2^2$. Now note that
\begin{align*}
\|\tilde F \tilde L\|_2^2 &= \frac{1}{2 \pi} \int_{- \pi}^{\pi} \Tr (\tilde F(e^{j \omega}) \tilde L(e^{j \omega}) 
\tilde L(e^{j \omega})^* \tilde F(e^{j \omega})^*) d \omega \\
&= \frac{1}{2 \pi} \int_{- \pi}^{\pi} \Tr (\tilde F(e^{j \omega})^* \tilde F(e^{j \omega}) \tilde L(e^{j \omega}) \tilde L(e^{j \omega})^*) d \omega \\
&= \frac{1}{2 \pi} \int_{- \pi}^{\pi} \Tr (A(e^{j \omega})^2 \tilde L(e^{j \omega}) \tilde L(e^{j \omega})^*) d \omega \\
&= \frac{1}{2 \pi} \int_{- \pi}^{\pi} \Tr (A(e^{j \omega}) \tilde L(e^{j \omega}) \tilde L(e^{j \omega})^* A(e^{j \omega})) d \omega,
\end{align*}
where for all $\omega$, $A(e^{j \omega})$ is the unique Hermitian positive-semidefinite square root 
of $\tilde F(e^{j \omega})^* \tilde F(e^{j \omega})$, i.e., $A(e^{j \omega})^2 = \tilde F(e^{j \omega})^* \tilde F(e^{j \omega})$.
%\todo{perhaps give ref. to Horn and Johnson} 
% Reference on pos-sem. square root (and uniqueness):
%http://www.caam.rice.edu/~caam440/chapter2.pdf
%Horn and Johnson
% Here F*F is hermitian, pos. sem. def.
%
Then, once again from the Cauchy-Schwarz inequality, now for the inner product
$
\langle M, N \rangle = \frac{1}{2 \pi} \int_{-\pi}^\pi \Tr (M(e^{j \omega})^* N(e^{j \omega})) d \omega,
$
\begin{align}
\|GK\|_2^2 \; \|F L\|_2^2 &= \|\tilde G \|_2^2 \|\tilde F \tilde L\|_2^2 = \left[ \frac{1}{2 \pi} \int_{- \pi}^{\pi} \Tr (\tilde G(e^{j \omega})^* \tilde G(e^{j \omega})) d \omega \right] \nonumber \\
\times \Bigg[ \frac{1}{2 \pi} &\int_{- \pi}^{\pi} \Tr (A(e^{j \omega}) \tilde L(e^{j \omega}) \tilde L(e^{j \omega})^* A(e^{j \omega})) d \omega \Bigg] \nonumber \\
%\text{hence, } \; \|GK\|_2^2 \; \|F L\|_2^2 
&\geq \left(  \frac{1}{2 \pi} \int_{- \pi}^{\pi}  \Tr (A(e^{j \omega}) \tilde L(e^{j \omega}) \tilde G(e^{j \omega})) d \omega \right)^2 \nonumber \\
\text{and so } \; e^{ZFE}_{mse}(G) &\geq
\kappa_{\delta,\epsilon}^2 \left(  \frac{1}{2 \pi} \int_{- \pi}^{\pi}  \|F(e^{j \omega}) K\|_* d \omega \right)^2,
%\kappa_{\delta,\epsilon}^2 \left(  \frac{1}{2 \pi} \int_{- \pi}^{\pi}  \Tr (A(e^{j \omega})) d \omega \right)^2
    \label{lower bound: general case} 
%\|G\|_2^2 \|F L\|_2^2 &\geq \left(  \frac{1}{2 \pi} \int_{- \pi}^{\pi}  \Tr (A(e^{j \omega})) d \omega \right)^2 \\
\end{align}
where $\|F(e^{j \omega}) K \|_* = \Tr (A(e^{j \omega}))$ denotes the nuclear norm of the matrix $F(e^{j \omega}) K$. 
% (sum of singular values).
%The lower bound \eqref{lower bound: general case} on the achievable MSE with a general pre-filter in a ZFE mechanism 
%should be compared to the performance (\ref{eq: ZFE bound in thm - diagonal}) that we obtained with diagonal pre-filters.
%Note that these bounds co\"incide for $m=1$. For $m > 1$, the gap depends on the difference between the integrals of the
%$L_{2,1}$ norm and the nuclear norm of $F(e^{j \omega}) K$.
\end{proof}

\section{Application to Privacy-Preserving Estimation of Building Occupancy}    \label{section: building monitoring}

%\subsection{Forecasting Building Occupancy}	\label{section: building monitoring}

%\begin{figure}
%\includegraphics[width=\linewidth]{graphics/map1.pdf}
%\caption{Floorplan of the part of the MERL building used for the sensor network experiment in \cite{MERLdataset_2007}.
%The shaded areas are hallways, lobbies and meeting rooms equipped with more than 200 motion detectors, placed
%a few meters apart and recording events roughly every seconds.}
%\label{fig: MERL floorplan}
%\end{figure}

\begin{figure}
\includegraphics[width=\linewidth]{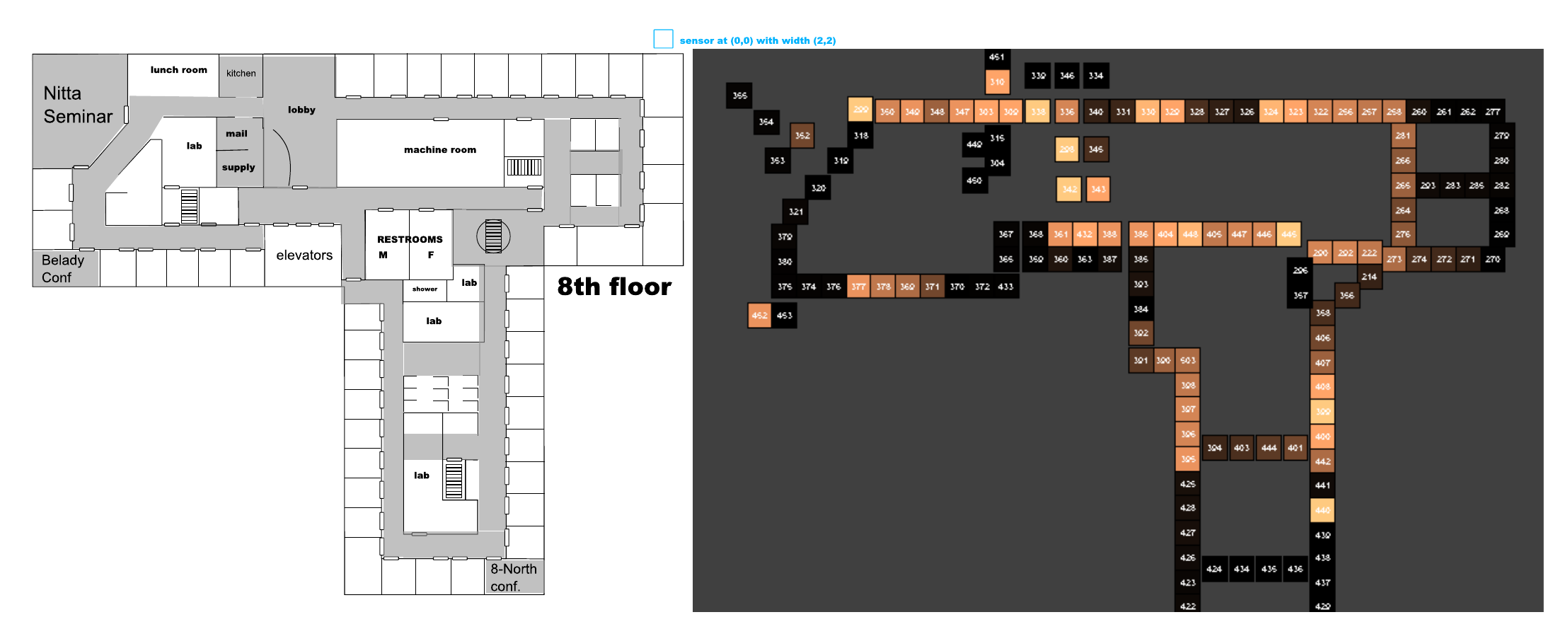}
\caption{Left: plan of one of the two floors in the MERL building used for the sensor network experiment  \cite{MERLdataset_2007}. 
The shaded areas are hallways, lobbies and meeting rooms equipped with binary motion detection sensors, placed a few meters 
apart and recording events roughly every second. Right: a figure taken from \cite{MERL:07:sensorNetVisu} shows a visualisation of a crowd 
movement during a fire drill.}
\label{fig: MERL floorplan}
\end{figure}

In this section we illustrate the design process and the performance of the ZFE mechanism 
in the context of an application to filtering and forecasting occupancy-related events in an office building equipped 
with motion detection sensors. 
As mentioned in Section \ref{section: DP def}, such sensor networks raise privacy concerns since some occupants could potentially 
be tracked individually from the published information, especially when it is correlated with public information such as the location 
of their office. Since the amount of private information leakage depends on the output signals the system aims to generates, we
adjust the privacy-preserving noise level based on the filter specification using the ZFE mechanism.
As an example, we simulate the outputs of a $3 \times 15$ MIMO filter processing input signals collected during a sensor network 
experiment carried out at the Mitsubishi Electric Research Laboratories (MERL) and described in \cite{MERLdataset_2007} and 
on Fig. \ref{fig: MERL floorplan}. We refer the reader to \cite{MERL:07:sensorNetVisu} for examples of identification of individual 
trajectories from this dataset.

The original dataset contains the traces of $213$ sensors placed a few meters apart and spread over two floors of a building, 
where each sensor recorded roughly every second and with millisecond accuracy over a year the exact times at which they detected 
some motion. 
For illustration purposes we downsampled the dataset in space and time, summing all the events recorded by several sufficiently 
close sensors over $3$ minute intervals. 
From this step, we obtained $15$ input signals $u_i$, $i=1,\ldots,15$, corresponding to $15$ spatial zones 
(each zone covered by cluster of about $14$ sensors), with a discrete-time period corresponding to $3$ minutes, and $u_{i,t}$ 
being the number of events detected by all the sensors in zone $i$ during period $t$. Let us assume say that during a given
discrete-time period, a single individual can activate at most $4$ sensors in any group, hence $k_i=4$ for $1 \leq i \leq 15$.
%If moreover we assume that any individual travels through at most $5$ zones, then we could add a constraint $\sum_{i=1}^{10} k_i \leq 10$.
Moreover, we need to assume that a single individual only activates the sensors in a given zone once over the time interval for which we 
wish to provide differential privacy. Section \ref{section: DP def} discussed how to relax this requirement by splitting the input data
into successive time windows and creating additional input channels. 
%Another possibility with more sophisticated sensors would be to recognize a person a do not record the corresponding events

%As an illustrative example, we could be interested in a system computing simultaneously and in real-time the following three outputs:
For our example, we consider computing simultaneously and in real-time the following three outputs from the $15$ input signals
\begin{equation}	\label{eq: MERL desired F}
\begin{bmatrix} y_1 \\ y_2 \\ y_3 \end{bmatrix} = 
\begin{bmatrix} f_1(z) \mathbf{1}_{1 \times 5} \;\;\;\; \mathbf{0}_{1 \times 10} \\  
\mathbf{0}_{1 \times 4} \;\;\;\; f_2(z) \mathbf{1}_{1 \times 8} \;\;\;\; \mathbf{0}_{1 \times 3} \\ 
f_3(z) \end{bmatrix} u,
\end{equation}
where
\begin{itemize}
\item $y_1$ is the sum of the simple moving averages over the past $60$ min for zones $1$ to $5$, i.e., 
$f_1(z) = \frac{1}{20} \sum_{k=1}^{20} z^{-k}$,
\item $y_2$ is $\sum_{i=5}^{12} f_2 u_i$, with $f_2$ a finite impulse response low-pass filter with Gaussian shaped impulse 
response of length $20$, obtained using Matlab's function \texttt{gaussdesign(0.5,2,10)}.  %pictured on Fig. ?? \todo{should we introduce FIR in intro notation?}
\item $y_3$ is the scalar output of a $1 \times 15$ MISO filter $f_3$ designed to forecast at each period $t$ the average total number 
of events per time-period that will occur in the whole building during the window $[t+60 \text{ min}, t+90 \text{ min}]$. 
This filter was constructed by identifying an ARMAX model \cite{Ljung98_sysIdBook} between the $15$ inputs (plus a scalar white noise) 
and the desired outputs, with the calibration done using one part of the dataset. The model chosen is takes the form
\[ 
y_{3,t} = \sum_{i=1}^4 a_i y_{3,t-i} + b_0 u_t + b_1 u_{t-2} + c_1 e_t,
\]
where $a_1, \ldots, a_4$ and $b_0, b_1$ are scalar and row vectors respectively forming the filter $f_3$, $c_1$ is a scalar 
and $e_t$ is a zero-mean white noise input postulated by the ARMAX model for system identification purposes.
\end{itemize}
%Hence our desired filter has the structure
%\begin{equation}	\label{eq: MERL desired F}
%F=\begin{bmatrix}
%* & * & * & * & 0 & 0 & 0 & 0 & 0 & 0 \\
%0 & 0 & * & * & * & * & * & 0 & 0 & 0 \\
%* & * & * & * & * & * & * & * & * & *  
%\end{bmatrix},
%\end{equation}
%where $*$ denotes a non-zero transfer function.
Fig. \ref{fig: MERL forecast} shows sample paths over a $72$h period of the $2$nd and $3$rd outputs
of the desired filter and of its ($\ln 5, 0.05$)-differentially private approximation obtained using the ZFE
mechanism. The $15$ optimal pre-filters were obtained approximately via least-squares fit of 
$\sqrt{|F_i(e^{j\omega})|_2}$ with negligible approximation error (Matlab's function $\texttt{yulewalk}$ 
implementing the Yule-Walker method \cite{Stoica:Book05:SpectralAnalysis}), rather than true 
spectral factorization mentioned in Theorem \ref{thm: error for ZFE mechanism MIMO diagonal}.
One apparent aspect of the privacy-preserving outputs seen on Fig. \ref{fig: MERL forecast} is that the 
noise level is independent of the size of the desired output signal, hence low signal values tend to 
be easily buried in the noise. This is one drawback of mechanisms relying on global sensitivity measures 
and additive noise. Another noticeable element is the fact that the noise remaining on each output can have 
quite different characteristics depending on the desired filter $F$, with the post-filter $FG^{-1}$ removing 
more high-frequency components on the second output than on the third.

%Notice on the figure that the approach used here relying on the
%notion of sensitivity requires a noise level independent of the size of the desired output signal, hence
%low signal values tend to be easily buried in the noise.
%The first example uses the dataset described in \cite{MERLdataset_2007}.
%The rationale to equip buildings with sensor networks is typically to forecast occupancy in order to optimize energy 
%efficiency. At the same time, such networks can be used to track individuals and groups of people throughout the building
%\textcolor{red}{[ref?]}, which could make the workplace less desirable to employees that know their activities are being
%constantly monitored.
%
%\begin{itemize}
%\item Moving average.
%\item ARMA forecasting.
%\end{itemize}

%\begin{figure}
%\includegraphics[width=\linewidth]{graphics/forecast.pdf}
%\caption{Real-time forecast of the total number of events that will be detected in the next 20 minutes in the whole building.
%We show the output of an ARMAX model calibrated on the dataset, which is the $3$rd output of the filter
%$F$ in \eqref{eq: MERL desired F}. We also show a $(1,0.05)$-differentially private output.}
%\label{fig: MERL forecast}
%\end{figure}

\begin{figure}
\centering
\includegraphics[width=0.7\linewidth]{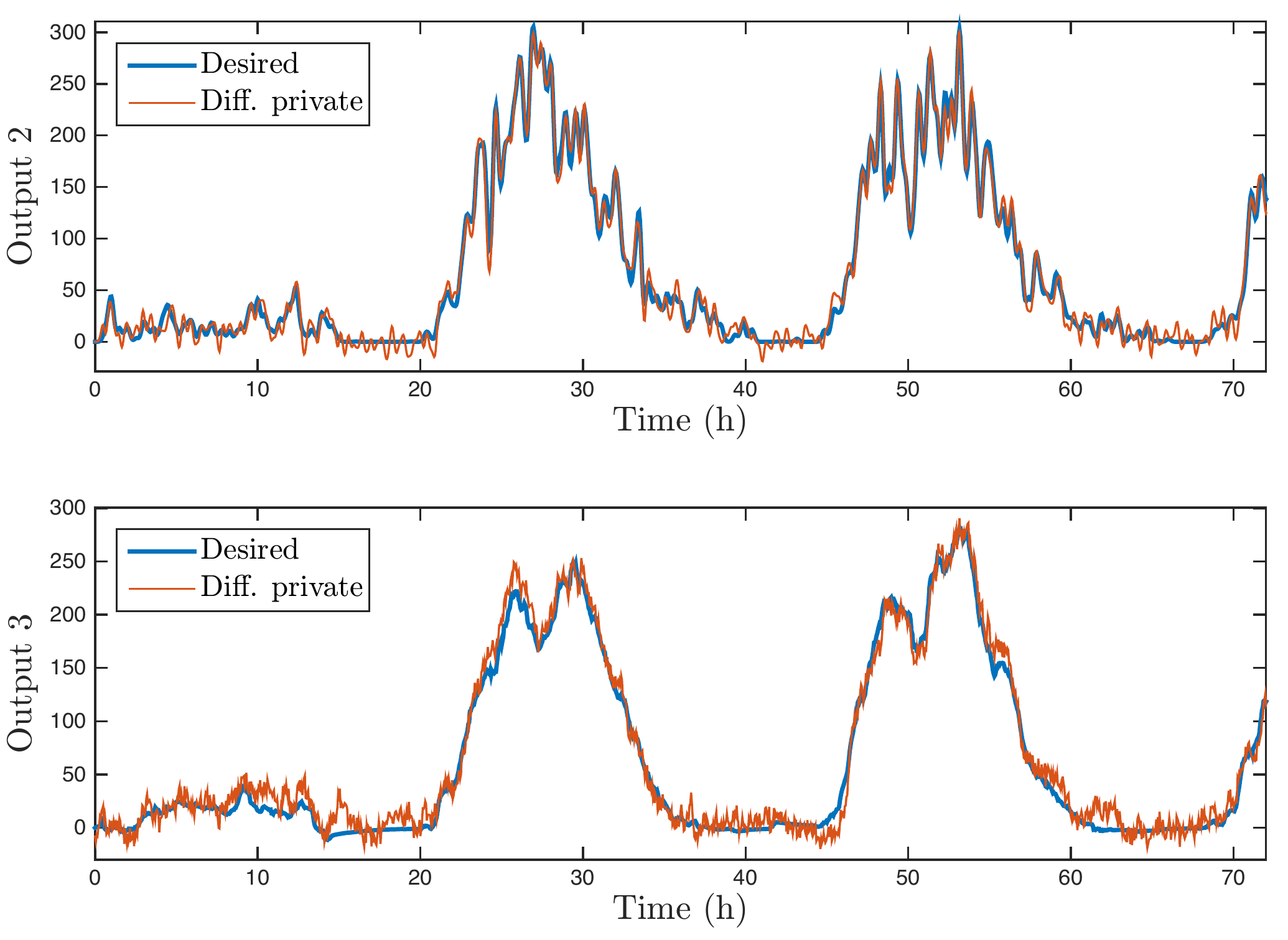}
\caption{Sample-paths over $72$ hours (the sampling period is $3$ min) for the outputs $2$ and $3$ of our differentially private approximation 
of filter (\ref{eq: MERL desired F}), shown together with the desired outputs. The privacy-related parameters are $\epsilon = \ln 5, \delta = 0.05, k_i = 4$ for $1 \leq i \leq 15$.}
%Real-time forecast of the total number of events that will be detected in the next 20 minutes in the whole building.
%We show the output of an ARMAX model calibrated on the dataset, which is the $3$rd output of the filter
%$F$ in \eqref{eq: MERL desired F}. We also show a $(1,0.05)$-differentially private output.}
\label{fig: MERL forecast}
\end{figure}

%!TEX root =  ../dpEventFiltering.tex

%\section{Linear Equalization Mechanisms}		\label{eq: linear equalization mechanisms}

%\section{Linear MMSE Mechanism}		\label{section: MMSE mechanism}
%\section{Mechanisms Exploiting Statistics on the Input Signals}
\section{Exploiting Additional Information on the Input Signals}		\label{section: additional info}

The main issue with SISO zero-forcing equalizers is the noise amplification % in communication systems
at frequencies where $|G(e^{j \omega})|$ is small, due to the inversion in $H = F G^{-1}$ \cite{Proakis00_digitalComBook}. 
This issue is not as problematic for the optimal ZFE mechanism, since in this case 
%we essentially have $|H(e^{j \omega})|=\sqrt{|F(e^{j \omega})|}$, i.e., 
the amplification is compensated by the fact that $|F(e^{j \omega})|$ and $|G(e^{j \omega})|$ given in
Theorem \ref{thm: error for ZFE mechanism MIMO diagonal} are both small at the same frequencies.
Nonetheless, we expect to be able to improve on the ZFE mechanism by using more advanced equalization schemes 
in the design of $H$. 
In this section the choice of post-filter $H$ of Fig. \ref{fig: approximation setup} depends 
on certain input signal properties, and we concentrate on the optimization of \emph{diagonal} pre-filters $G$
as for the ZFE mechanism.
Note however that the mechanisms below require that the input signals satisfy certain constraints, 
such as wide-sense stationarity, and that some publicly available information on these signals be 
available, e.g., their second order statistics. 
%These constraints might not necessarily be satisfied or this information might not be available, and 
Hence, the ZFE mechanism remains generally useful due to its broad applicability.
% and publicly available information about the input signals $u$, which might not always be available. Hence the ZFE mechanism remains generally useful.

%\subsection{Exploiting Information on Second-Order Statistics with Linear MMSE Mechanisms}
\subsection{Exploiting Information on Second-Order Statistics with Linear Mean Square Mechanisms}	\label{section: LMS mechanism}
 
%\todo[inline]{Should clarify that power spectra are assumed continuous, i.e., no Dirac delta. I.e., regular process, no periodic component.
%See book on signal processing. This is also the case for Markov chains, they should be aperiodic (and probably irreducible, so ergodic).} 

%\todo[inline]{Renaming LMMSE to LMS. Check if terminology makes sense.}

First, one can improve on the ZFE mechanism if some information on the statistics of the input signal $u$ is publicly available.
In general, constructing the optimum maximum-likelihood estimate of $\{(Fu)_k\}_{k \geq 0}$ from $\{v_k\}_{k \geq 0}$ on 
Fig. \ref{fig: approximation setup} is computationally intensive and requires the knowledge of the full joint probability distribution 
of $\{u_k\}_{k \geq 0}$ \cite{Proakis00_digitalComBook}. 
%This is the main reason why simpler linear estimation architectures such as the one described in the 
%previous subsection are more often implemented. 
Hence, we explore in this subsection a simpler scheme based on linear minimum mean square error estimation, % (LMMSE)
which we call the Linear Mean Square (LMS) mechanism. 
%It performs better than the ZFE mechanism but requires some additional knowledge about the second order statistics of $u$.
%This scheme was briefly discussed in \cite{LeNy_DP_CDC12} in the SISO case, but the optimization of $G$ described below 
%was not performed in that paper.

%\todo[inline]{Review terminology, presentation of notation (move to intro?), etc. for following paragraph.}
To develop the LMS mechanism, we assume that it is publicly known that $u$ is wide-sense stationary (WSS) 
with know mean vector $\mu$ and matrix-valued autocorrelation sequence $R_u[k] = \mathbb E[u_t u^T_{t-k}] = R_u[-k]^T, \forall k$. 
Since the mean of the output $y$ is then known, equal to $F(1) \mu$, we can assume without loss of generality that $\mu = 0$.
%Without loss of generality, we can assume $\mu$ to be zero, by substracting the known mean of $y$, equal to $F(1) \mu$.
The $z$-spectrum of $u$, $P_u(z) = \sum_{k=-\infty}^\infty R_{u}[k] z^{-k}$, is assumed for simplicity of exposition to be rational 
and positive definite on the unit circle, i.e., for $z = e^{j \omega}$.
%
%The power spectral density matrix of $u$ is denoted $P_u$, and is assumed to be have rational entries and be positive definite
%on the unit circle. \todo{review this condition, see Kailath's book.} 
More generally, given two vector-valued WSS zero-mean signals $u$ and $v$, we denote the cross-correlation matrix
$R_{uv}[k] = \mathbb E[u_t v^T_{t-k}]$, the cross z-spectrum $P_{uv}(z) = \sum_{k=-\infty}^\infty R_{uv}[k] z^{-k}$,
and all $z$-spectra are assumed to be rational.

The design of the LMS mechanism relies on a Wiener filter $H$ to estimate $y$ from $v$ 
on Fig. \ref{fig: approximation setup} \cite{Kailath00_linearEstBook, Poor94_SPbook}. 
Recall that the Wiener filter produces an estimate $\hat y$ minimizing the MSE between
$y$ and $\hat y$ over linear filters, assuming that the signal $v$ is stationary. Its design requires the
knowledge of the second-order statistics of $v$, which can be expressed in terms of those of $u$, $w$, and
of the transfer function $G$. Our design procedure involves the following steps.
%
%Moreover, replacing $F$ by $F \Phi_u$, with $\Phi_u$ the spectral factor of $P_u$, we can in fact assume that $u$ is an iid
%sequence with unit variance. \todo{really?}
%$\mathcal E_u := \mathbb E[u_k^2]$. 
First, for tractability reasons, we assume initially that $H$ is an infinite impulse response (IIR) Wiener \emph{smoother}, i.e., non-causal.
The reason is that we can then express the estimation performance analytically as a function of $G$.
%derive the performance of the non-causal infinite impulse response Wiener filter, which can
%be expressed analytically, and 
%
We then optimize this performance measure over diagonal pre-filters $G$. %, assuming this choice for $H$. 
Once $G$ is fixed, real-time considerations force us to use a lower-performance design with $H$ a \emph{causal} Wiener filter, 
or perhaps a slightly non-causal filter introducing a small delay is tolerable for a specific application.
%if it can be implemented by introducing a delay that is tolerable for a specific application.
%causal or perhaps slightly non-causal by introducing a small delay.
%

\subsubsection{Diagonal Pre-Filter Optimization}		\label{section: MMSE optimization}

The (non-causal) Wiener smoother $H$ has the transfer function
$
%H(e^{j \omega}) = \frac{P_{yv}(e^{j \omega})}{P_v(e^{j \omega})},
%H(z) = \frac{P_{yv}(z)}{P_v(z)},
H(z) = P_{yv}(z) P_v(z)^{-1}
$ 
%where $P_{yv}$ is the cross power spectral density of $y$ and $v$.
%\cite{Kailath00_linearEstBook, Poor94_SPbook}. \todo{give ref to p. number}
% N.B.: y is vector valued, v is scalar. Pyv(z) is indeed a column transfer matrix.
\cite[Section 7.8]{Kailath00_linearEstBook}.
According to Theorem \ref{thm: diagonal system sensitivity}, for $G$ diagonal we can take the 
privacy-preserving noise $w$ to be white and Gaussian with covariance $\sigma^2 I_m$ with
$\sigma^2  = \kappa_{\delta,\epsilon}^2 \|GK\|_2^2$. Since $u$ and $w$ are uncorrelated, 
we have 
\begin{equation}	\label{eq: spectra}
P_{yv}(z) = F(z) P_u(z) G(z^{-1})^T, \,\, P_v(z) = G(z) P_u(z) G(z^{-1})^T + \sigma^2 I_m.
\end{equation}
Hence 
\begin{align}	\label{eq: Wiener smoother}
H(z) = F(z) P_u(z) G(z^{-1})^T \left(G(z) P_u(z) G(z^{-1})^T + \kappa_{\delta,\epsilon}^2 \|G\|_2^2 \, I_m \right)^{-1}.
\end{align}

%Since $w$ and $u$ are uncorrelated, we have
%\[
%P_{yv}(e^{j \omega}) = P_y(e^{j \omega}) G^*(e^{j \omega}).
%P_{yv}(z) = P_u(z) F(z) G^*(1/z^*).
%P_{yv}(z) = F(z) P_u(z) G(z^{-1})^T.
%\]
%As for $P_v$, we have, with $n$ a white noise of variance $\sigma^2 = d^2 \kappa_{\delta,\epsilon}^2 \|G\|_2^2$,
%that $
%P_u(e^{j \omega}) = P_u(e^{j \omega}) |G(e^{j \omega})|^2 + \sigma^2.
%P_v(z) = P_u(z) G(z) G^*(1/z^*) + \sigma^2.
%P_v(z) = P_u(z) G(z) G(z^{-1}) + \sigma^2.
%$
%Hence %for filter with real coefficients
%\begin{align}		\label{eq: LMMSE filter}
%H(z) = \frac{P_u(z) G^*(1/z^*)}{P_u(z) G(z) G^*(1/z^*) + \kappa(\delta,\epsilon)^2 \|G\|_2^2}, \\
%H(z) = \frac{P_u(z) F(z) G(1/z)}{P_u(z) G(z) G(1/z) + \kappa(\delta,\epsilon)^2 \|G\|_2^2}, \\
%H(z) = \frac{P_u(z) F(z) G(z^{-1})}{P_u(z) G(z) G(z^{-1}) + \kappa_{\delta,\epsilon}^2 \|G\|_2^2}.
%H(e^{j \omega}) = \frac{P_u(e^{j \omega}) F(e^{j \omega}) G^*(e^{j \omega})}{P_u(e^{j \omega}) |G(e^{j \omega})|^2 + \kappa(\delta,\epsilon)^2 \|G\|_2^2}.
%\end{align}
The MSE can then be expressed as 
$e^{lms}(G) = \frac{1}{2 \pi} \int_{-\pi}^\pi \Tr (P_y(e^{j \omega}) - P_{\hat y}(e^{j \omega})) d \omega$
\cite[Chapter 7]{Kailath00_linearEstBook}. In our case,
\begin{align*}
P_{\hat y}(e^{j \omega}) &= H(e^{j \omega}) P_v(e^{j \omega}) H(e^{j \omega})^*
= P_{yv}(e^{j \omega}) P_v(e^{j \omega})^{-1} P_{yv}(e^{j \omega})^* \\
P_{\hat y} &= F P_u G^* (\sigma^2 I_m + G P_u G^*)^{-1} G P_u F^*,
\end{align*}
where on the second line and below we omit the argument $e^{j \omega}$ next to all matrices, to simplify the notation.
We have then
\begin{align*}
P_y - P_{\hat y} &= F P_u F^* - P_{\hat y} = F (P_u - P_u G^* (\sigma^2 I_m + G P_u G^*)^{-1} G P_u) F^* \\
&= F \left( P_u^{-1} + \frac{1}{\sigma^2} G^* G \right)^{-1} F^*,
\end{align*}
with the last expression obtained using the matrix inversion lemma. Finally, defining
$\tilde G(e^{j \omega}) := \frac{1}{\|GK\|_2} G(e^{j \omega}) K$, we obtain the expression
\begin{align}
e^{lms}(\tilde G) &= \frac{1}{2 \pi} \int_{-\pi}^\pi \Tr \left[ F(e^{j \omega}) 
\left( P_u(e^{j \omega})^{-1} + \frac{1}{\kappa_{\delta,\epsilon}^2} K^{-1} \tilde G(e^{j \omega})^* \tilde G(e^{j \omega}) K^{-1} \right)^{-1} 
F^*(e^{j \omega})) \right] d \omega \nonumber \\
e^{lms}(\tilde G) &=  \frac{1}{2 \pi} \int_{-\pi}^\pi \Tr \left[ \tilde F(e^{j \omega}) 
(\tilde P_u(e^{j \omega})^{-1} + \tilde G(e^{j \omega})^* \tilde G(e^{j \omega}))^{-1} \tilde F(e^{j \omega})^* \right] d \omega,
\label{eq: objective lms}
\end{align}
where $\tilde F(e^{j \omega}) = \kappa_{\delta,\epsilon} F(e^{j \omega}) K$ and 
$\tilde P_u(e^{j \omega}) = \frac{1}{\kappa_{\delta,\epsilon}^2} K^{-1} P_u(e^{j \omega}) K^{-1}$.
The objective \eqref{eq: objective lms} should be minimized over all transfer functions $\tilde G$, which by definition must
satisfy the constraint 
\begin{equation}	\label{eq: constraint lms}
\|\tilde G\|^2_2 = \frac{1}{2 \pi} \int_{-\pi}^\pi \Tr (\tilde G(e^{j \omega})^* \tilde G(e^{j \omega})) d \omega = 1.
\end{equation}
Note that in \eqref{eq: objective lms} we recover the expression \eqref{eq: performance ZFE} of the performance of 
the ZFE mechanism in the limit  $\tilde P_u(e^{j \omega}) \to \infty$.
%$P_u(e^{j \omega}) >> d^2 \kappa(\delta,\epsilon)^2$.

%\begin{align}
%%\mathcal E &= \frac{1}{2 \pi} \int_{-\pi}^\pi \frac{P_u(e^{j \omega}) |F(e^{j \omega})|^2 \kappa(\delta,\epsilon)^2 \|G\|_2^2}
%%{P_u(e^{j \omega}) |G(e^{j \omega})|^2 + \kappa(\delta,\epsilon)^2 \|G\|_2^2} d \omega \nonumber \\
%%\xi^{LMMSE}(G) 
%e^{lms}(G) &= \frac{1}{2 \pi} \int_{-\pi}^\pi \frac{P_u(e^{j \omega}) |F(e^{j \omega})|^2} 
%{ \frac{P_u(e^{j \omega})}{d^2 \kappa_{\delta,\epsilon}^2} \frac{|G(e^{j \omega})|^2}{\|G\|_2^2} + 1 } d \omega. \label{eq: error IIR Wiener noncausal}
%\end{align}
%%
%Note that in (\ref{eq: objective lms}) we recover the expression \eqref{eq: performance ZFE} of the performance of the LZF mechanism in the limit  $P_u(e^{j \omega}) >> d^2 \kappa(\delta,\epsilon)^2$.

%\textcolor{red}{[Check Luenberger as ref. on waterfilling.]}
It remains to minimize the performance measure \eqref{eq: objective lms} over the choise of pre-filters $G$ satisfying \eqref{eq: constraint lms}.
First, in the case where $\tilde P_u(e^{j \omega})$ or equivalently $P_u(e^{j \omega})$ is diagonal for all $\omega$, i.e., the different input
signals are uncorrelated, we have in fact a classical allocation problem \cite{Luenberger:book69:optimization} whose solution is of 
the ``waterfilling type''. Namely, denote $\tilde P_u(e^{j \omega}) = \text{diag}(p_{1}(e^{j \omega}),\ldots, p_{m}(e^{j \omega}))$
and $X(e^{j \omega}) = \tilde G(e^{j \omega})^* \tilde G(e^{j \omega}) = \text{diag}(x_{1}(e^{j \omega}), \ldots, x_{m}(e^{j \omega}))$,
with $x_{i}(e^{j \omega}) = |\tilde g_{ii}(e^{j \omega})|^2$. 
%$\tilde G^* \tilde G = X = \text{diag}(x_{11}, \ldots, x_{mm})$,
Omitting the expression $e^{j \omega}$ in the integrals for clarity,   
\eqref{eq: objective lms} and \eqref{eq: constraint lms} read
\begin{align*}
\min_{x} \frac{1}{2 \pi} \int_{-\pi}^\pi \sum_{i=1}^m \frac{1}{\frac{1}{p_{i}}+x_{i}} |\tilde F_i|_2^2 \, d \omega 
\text{ s.t. } \frac{1}{2 \pi} \int_{-\pi}^\pi \sum_{i=1}^m x_{i} \, d \omega = 1, \;\; x_{i}(e^{j \omega}) \geq 0, \forall \omega, i,
\end{align*}
and the solution to this convex problem is
\[
x_{i}(e^{j \omega}) = \max \left \{ 0, \sqrt{ \frac{|\tilde F_i(e^{j \omega})|_2^2 }{\lambda}} - \frac{1}{p_{i}(e^{j \omega})} \right \},
\]
where $\lambda>0$ is adjusted so that the solution satisfies the equality constraint (\ref{eq: constraint lms}).
Problems of this type are discussed in the communication literature on joint transmitter-receiver optimization 
\cite{Salz:85:MIMOopt, Yang:94:TxRxOpt}, which is not too surprising in view of our approximation setup on 
Fig. \ref{fig: approximation setup}. 

When $\tilde P_u(e^{j \omega})$ is not diagonal, it is shown in \cite{Yang:94:TxRxOpt} that
the problem can be reduced to the diagonal case if $\tilde G$ can be arbitrary, however the argument does not 
%appear to 
carry through under our constraint that $\tilde G$ must also be diagonal.
%
%\textcolor{red}{[Add waterfilling solution for $P_u$ diagonal. SDP otherwise, I don't think we can use the technique of Jiang because
%$G$ needs to stay diagonal.]}
Nonetheless, one can obtain a solution arbitrarily close to the optimal one using semidefinite programming. 
% We propose the following approximate optimization procedure, and leave the design of potentially more efficient algorithms to
%future work. 
%Since $\tilde G(z)$ is diagonal, we define
%$
%X(e^{j \omega}) = \tilde G(e^{j \omega})^* \tilde G(e^{j \omega}) = \text{diag}(x_{1}(e^{j \omega}), \ldots, x_{m}(e^{j \omega})),
%$
%with $x_{i}(e^{j \omega}) = |\tilde g_{ii}(e^{j \omega})|^2$. 
First, we discretize the optimization problem at the set of frequencies 
$\omega_q = \frac{q \pi}{N}, q=0 \ldots N$. Note that all functions are even functions of $\omega$, hence we can restrict 
out attention to the interval $[0,\pi]$. Then, we define the $m(N+1)$ variables $x_{iq} = x_{i}(e^{j \omega_q})$, with $x_{iq} \geq 0$, 
and $X_q = \text{diag}(x_{1q},\ldots,x_{mq})$. %\todo{might be able to use $n$ as subscript instead of $q$, not used?}
Using the trapezoidal rule to approximate the integrals, we obtain the following optimization problem %with $\mathbf x = \{x_{iq}\}_{i,q}$
% note that we consider the integrals over [0,pi] due to the eveness of the functions
\begin{align}
\min_{ \{X_q, M_q\}_{0 \leq q \leq N}} \quad & \frac{1}{2N} \sum_{q=0}^{N-1} 
%\Tr \left[ \tilde F_{q} (\tilde P_q^{-1} + X_q)^{-1} \tilde F_{q}^* + \tilde F_{q+1} (\tilde P_{q+1}^{-1} + X_{q+1})^{-1} \tilde F_{q+1}^* \right] \\
\Tr \left[ M_q + M_{q+1} \right] \label{eq: lms optimization objective} \\
\text{s.t. } & \begin{bmatrix} M_q & \tilde F_q \\ \tilde F_q^* & \tilde P_q^{-1} + X_q \end{bmatrix} \succeq 0, \;\; 0 \leq q \leq N, \label{eq: LMI lms} \\
& \frac{1}{2N \pi} \sum_{q=0}^{N-1} \Tr [ X_q + X_{q+1} ] = 1, \text{ and } X_q \succeq 0, \;\; 0 \leq q \leq N, \nonumber
 \end{align}
where $\tilde F_q := \tilde F(e^{j \omega_q})$ and $\tilde P_q = \tilde P_u(e^{j \omega_q})$. Note that \eqref{eq: LMI lms} is equivalent to 
$M_q \succeq \tilde F_{q} (\tilde P_q^{-1} + X_q)^{-1} \tilde F_{q}^*$ by taking the Schur complement.
The optimization problem \eqref{eq: LMI lms} is a semidefinite program, and can thus be solved efficiently even for relatively fine 
discretizations of the interval $[0,\pi]$. The transfer functions $\tilde g_{ii}$ (and hence $g_{ii}$) of the filter $\tilde G$ 
can then be obtained by interpolation of the squared magnitude $x_i(e^{j \omega})$ from the $x_{iq}$ and $m$ spectral 
factorizations.

%\vspace{0.2cm}
\begin{rem}
Even if the statistical assumptions on $u$ turn out not to be correct, and even though the optimization problem is solved
approximately rather than exactly, the differential privacy guarantee of the LMS mechanism still holds and only the approximation 
quality is impacted.
\end{rem}
%\vspace{0.2cm}

\subsubsection{Causal Mechanism}		\label{section: causal MMSE}
%\textcolor{red}{HERE}
Once the diagonal pre-filter $G$ is computed by the optimization procedure described above, we construct the final mechanism by replacing 
the smoother $H$ from \eqref{eq: Wiener smoother} by a filter respecting the causality or delay constraints of the application.
%
%The previous description of the LMMSE mechanism involves a possibly non-causal filter $H$. 
%Sometimes, the anti-causal part of this filter might have a fast decreasing impulse response, in which case the scheme can be 
%implemented approximately by introducing a small delay in the release of the output signal $\hat y$. 
%Otherwise, we need to implement a causal Wiener filter $H$. 
%
%
%Assuming that $G(z)$ is chosen to be rational and with our standing assumption that $P_u$ is rational, then $P_v$ is rational as well,
%and 
Note from \eqref{eq: spectra} that $P_v(e^{j \omega}) \succ 0$ for all $-\pi \leq \omega < \pi$. 
%Hence 
Denote the canonical spectral factorization 
$
P_v(z) = L(z) P_e L(z^{-1})^T
$
where $P_e \succ 0$ and $L$ and $L^{-1}$ are analytic in the region $|z| \geq 1$ and $L(\infty) = I_m$ \cite[Section 7.8]{Kailath00_linearEstBook}.
Then the causal Wiener filter is
$
H(z) = [P_{yv}(z) L(z^{-1})^{-T}]_+ \, P_e^{-1} L(z)^{-1}, 
$
where for a linear filter $M(z)$ with (matrix-valued) impulse reponse $\{M_t\}_{- \infty \leq t \leq \infty}$, $[M(z)]_+$ denotes the causal filter with impulse response $\{M_t \mathbf 1_{\{t \geq 0\}}\}_t$.
%
%then $P_v$ is rational as well, and we can write the spectral factorization 
%$P_v(z) = Q_v(z) Q_v(z^{-1})$, with $Q_v$ minimum phase.  % \gamma_v^2 
%
%We then have
%$
%H(z) = \frac{1}{Q_v(z)} \left[  \frac{P_{yv}(z)}{Q_v(z^{-1})}   \right]_+,   % \gamma_v^2 
%$ see, e.g.,  \cite{Poor94_SPbook}.
%Here, for a linear filter $L$ with impulse reponse $\{l_t\}_{- \infty \leq t \leq \infty}$, $[L(z)]_+$ denotes the causal filter with impulse response $\{l_t \mathbf 1_{\{t \geq 0\}}\}_t$.
%Due to the more complex expression for $H$ and the resulting MSE, the design of the optimal filter $G$ in this case  is left for future work. Here, we optimize
%$G$ assuming a possibly non-causal filter $H$, and then simply modify $H$ afterwards if causality needs to be enforced.

%!TEX root =  ../dpEventFiltering.tex

\subsection{Exploiting Information on the Input Domain using Decision-Feedback Mechanisms}

%In general, constructing the optimum maximum-likelihood estimate of $\{(Fu)_k\}_{k \geq 0}$ 
% from $\{v_k\}_{k \geq 0}$ on Fig. \ref{fig: approximation setup} is 
% computationally intensive and requires the knowledge of the full joint probability 
% distribution of $\{u_k\}_{k \geq 0}$ \cite{Proakis00_digitalComBook}. This is 
% the main reason why simpler linear estimation architectures such as the one 
% described in the previous subsection are more often implemented. 
%
%So far however, we have not exploited in the estimation procedures the knowledge 
% that the input signal takes discrete values (or perhaps is even binary valued, 
% as in \cite{Dwork10_DPcounter, Chan11_DPcontinuous}).

Signals capturing event streams often take values in a discrete set, e.g., if they 
originate from various counting sensors as in Section \ref{section: building monitoring}.  
This information can be taken into account together with the previous statistical 
information by introducing a slight degree of nonlinearity in the LMS mechanism, 
using the idea of decision-feedback  equalization \cite{Proakis00_digitalComBook}. 
We call the resulting mechanism presented below a Decision-Feedback (DF) mechanism. 
Its architecture is depicted on Fig. \ref{fig: DF mechanism}.
% where a decision block and 
%a feedback loop have been added to the structure of the optimal linear smoother \eqref{eq: Wiener smoother}.
% of the Wiener smoother \eqref{eq: Wiener smoother}.
Note that compared to the optimal Wiener smoother \eqref{eq: Wiener smoother} 
in the previous section, which is of the form $F H_u$, the only structural difference is the presence 
of the decision block and the feedback loop.

\begin{figure}
\centering
\includegraphics[width=0.8\linewidth]{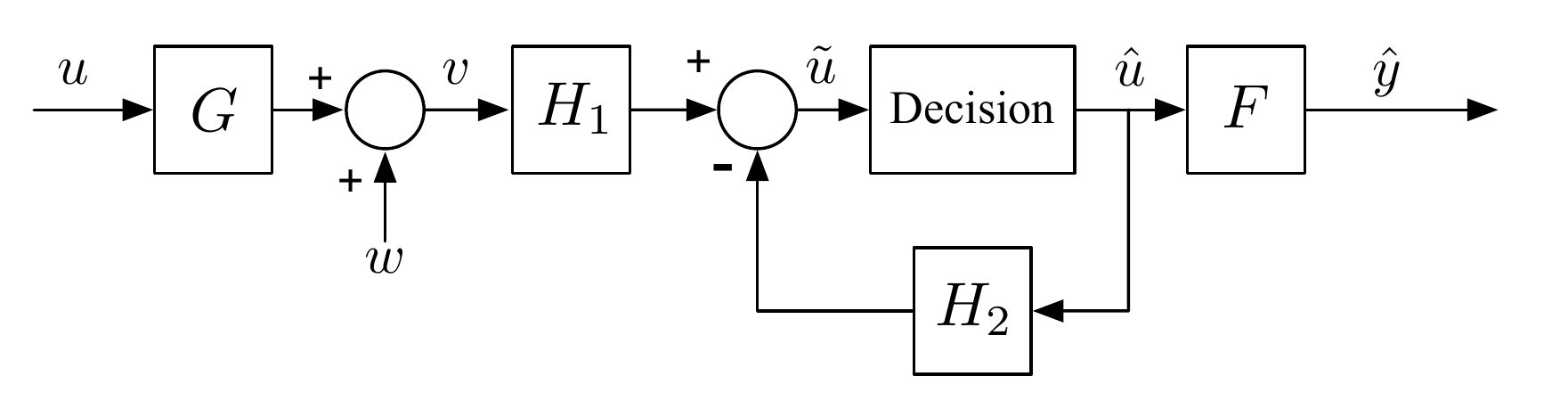}
\caption{Decision-feedback mechanism. The decision block is nonlinear and depends 
on the information about the domain of the input signal $u$, operating as a detector 
or quantizer for example.}
\label{fig: DF mechanism}
\end{figure}

The first stage of the DF mechanism is a pre-filter whose sensitivity determines 
as usual the amount of privacy-preserving noise $w$ to add. The second stage 
consists of a forward filter $H_1$, a nonlinear decision procedure (detector or 
quantizer) to estimate $u$ from $\tilde u$, which exploits the fact that $u$ 
takes discrete values, and a filter $H_2$ that feeds back the previous symbol 
decisions to correct the intermediate estimate $\tilde u$. $H_2$ is assumed to 
be strictly causal, but $H_1$ is often taken to be at least slightly non-causal in 
standard DF equalizers, for better performance \cite{Voois96_delayDFE}. In this case
the mechanism will introduce a small delay in the publication of the 
output signal $\hat y$. In the absence of detailed information about the 
distribution of $u$, the decision device can be a simple quantizer for integer 
valued input sequences, or a detector $\hat u_k = \texttt{sign}(\tilde u_k)$
 for input sequences taking values in $\{-1,+1\}$.
 % with distributional information, could perhaps design a better quantizer

The error between the desired output $Fu$ and the signal $F \tilde u$, where 
$\tilde u$ is the input of the detector, is
%\begin{align*}
$e = F(u - \tilde u) = F(u - H_1 v + H_2 \hat u)$.
%\end{align*}
For tractability reasons, the analysis and design of DF equalizers is usually 
carried out under the simplifying assumption that the past decisions entering 
$H_2$ are correct, i.e., that $\hat u_k = u_k$. In this case, the error reads
\[
e \approx F ((B - H_1 G) u - H_1 w), 
\]
with $B(z) = I+H_2(z)$ a monic filter (i.e., $B_0 = I$) since $H_2$ is strictly causal. % \todo{give def. of monic MIMO filter}
For a given $B$, the system $H_1$ minimizing the MSE is again the Wiener smoother 
to estimate $Bu$ from $Gu + w$, i.e., comparing with \eqref{eq: Wiener smoother}, 
\eqref{eq: objective lms}
\begin{align*}
H_1(z) &= B(z) P_u(z) G(z^{-1})^T \left(G(z) P_u(z) G(z^{-1})^T + \kappa_{\delta,\epsilon}^2 \|G\|_2^2 \, I_m \right)^{-1}, \\
e^{df}(B, \tilde G) &=  \frac{\kappa_{\delta,\epsilon}^2}{2 \pi} \int_{-\pi}^\pi \Tr \left[ F(e^{j \omega}) B(e^{j \omega}) 
K (\tilde P_u(e^{j \omega})^{-1} + \tilde G(e^{j \omega})^* \tilde G(e^{j \omega}))^{-1} K B(e^{j \omega})^* F(e^{j \omega})^* \right] d \omega.
\end{align*}
The next step is to optimize over the monic filter $B$. First, consider the 
spectral factorizations 
%\textcolor{red}{[ref, check that we can put the conjugate on either side...]}
% start with standard, Q R Q^*, then pose P=Q^T...
\begin{align}
& K (\tilde P_u(e^{j \omega})^{-1} + \tilde G(e^{j \omega})^* \tilde G(e^{j \omega}))^{-1} K = Q(e^{j \omega}) R Q(e^{j \omega})^* 		\label{eq: spectral factorization G} \\
& F(e^{j \omega})^* F(e^{j \omega}) = S(e^{j \omega})^* T S(e^{j \omega}),		\label{eq: spectral factorization F}
\end{align}
with $Q, S$ monic, causal, stable and invertible filters, and $R, T$ positive 
definite matrices. We have
\begin{align*}
e^{df}(B, \tilde G) 
&=  \frac{\kappa_{\delta,\epsilon}^2}{2 \pi} \int_{-\pi}^\pi \Tr \left[ 
B(e^{j \omega}) Q(e^{j \omega}) R  Q(e^{j \omega})^* B(e^{j \omega})^* S(e^{j \omega})^* T S(e^{j \omega})
\right] d\omega \\
&=  \frac{\kappa_{\delta,\epsilon}^2}{2 \pi} \int_{-\pi}^\pi \Tr \left[
T^{1/2}S(e^{j \omega})  B(e^{j \omega}) Q(e^{j \omega}) R  Q(e^{j \omega})^* B(e^{j \omega})^* S(e^{j \omega})^* T^{1/2} 
\right] d\omega \\
&= \kappa_{\delta,\epsilon}^2 \| T^{1/2} SBQ R^{1/2} \|_2^2.
\end{align*}

Note that $SBQ$ is a monic, stable and causal filter. We have the following lemma.

\begin{lem}	\label{lem: monic filter opt}
If $G(z) = \sum_{k \geq 0} G_k z^{-k}$ be a monic ($G_0 = I$), stable and causal 
filter. Let $T$, $R$ be positive definite Hermitian matrices. Then
\[
\| T^{1/2} G R^{1/2}\|^2_2 \geq \Tr(TR),
\]
with equality attained when $G = \text{Id}$, i.e., $G_k = 0$ for $k \geq 1$.
\end{lem}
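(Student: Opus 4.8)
The plan is to pass to the time domain: by Parseval's identity, the squared $\mathcal H_2$ norm of any stable filter equals the sum of the squared Frobenius norms of its matrix-valued impulse-response coefficients, and for $T^{1/2} G R^{1/2}$ this sum splits neatly, with the $k=0$ term producing $\Tr(TR)$ and every remaining term being nonnegative.

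Concretely, I would write $G(z) = \sum_{k \geq 0} G_k z^{-k}$, so that $T^{1/2} G R^{1/2}$ is a causal filter with impulse response $\{T^{1/2} G_k R^{1/2}\}_{k \geq 0}$ (here $T^{1/2}$ and $R^{1/2}$ are the Hermitian positive-definite square roots of $T$ and $R$). Stability of $G$ guarantees $\|G\|_2 < \infty$, so all the sums below converge. Using the time-domain form of the $\mathcal H_2$ norm recalled in Section \ref{section: sensitivity} together with the cyclic invariance of the trace,
\[
\| T^{1/2} G R^{1/2} \|_2^2 = \sum_{k \geq 0} \Tr\!\big( (T^{1/2} G_k R^{1/2})^* (T^{1/2} G_k R^{1/2}) \big) = \sum_{k \geq 0} \Tr\!\big( T\, G_k R G_k^* \big).
\]
Since $G_0 = I$ by monicity, the $k=0$ term equals $\Tr(TR)$. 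For each $k \geq 1$, the matrix $G_k R G_k^*$ is positive semidefinite because $R \succ 0$, and $T \succ 0$, so $\Tr( T\, G_k R G_k^* ) \geq 0$ (the trace of the product of two positive-semidefinite matrices is nonnegative). Adding these up yields $\| T^{1/2} G R^{1/2} \|_2^2 \geq \Tr(TR)$.

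For the equality case, $\| T^{1/2} G R^{1/2} \|_2^2 = \Tr(TR)$ forces $\Tr( T\, G_k R G_k^* ) = 0$ for every $k \geq 1$; since $T \succ 0$ this implies $G_k R G_k^* = 0$, hence $G_k R^{1/2} = 0$, and because $R^{1/2}$ is invertible, $G_k = 0$ for all $k \geq 1$, i.e. $G = \mathrm{Id}$. Conversely $G = \mathrm{Id}$ obviously attains the bound. There is no real obstacle in this argument; the only points requiring a little care are the reduction of the $\mathcal H_2$ norm to the impulse-response sum via Parseval, and the nonnegativity of $\Tr(T\, G_k R G_k^*)$, which I would justify by writing $T = (T^{1/2})^2$ and applying the cyclic property once more so that it becomes $\Tr$ of a manifestly positive-semidefinite matrix.
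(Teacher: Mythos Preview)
Your argument is correct and is essentially the same as the paper's: both reduce the $\mathcal H_2$ norm to a time-domain sum via Parseval, isolate the $k=0$ term as $\Tr(TR)$, and observe that the remaining terms $\Tr(T^{1/2} G_k R G_k^* T^{1/2})$ are nonnegative. You additionally spell out that equality forces $G_k=0$ for $k\geq 1$ (using invertibility of $T^{1/2}$ and $R^{1/2}$), which the paper leaves implicit, but otherwise the proofs coincide.
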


\begin{proof}
We have
\begin{align*}
\| T^{1/2} G R^{1/2} \|^2_2 &= \frac{1}{2 \pi} \int_{\pi}^\pi \Tr \left[ T^{1/2} G(e^{j \omega}) R G^*(e^{j \omega}) T^{1/2} \right] d \omega \\
&= \Tr \left( \sum_{k \geq 0}  T^{1/2} G_k R G^*_k T^{1/2} \right) \;\; (\text{Parceval identity}) \\
&= \Tr (TR) +  \Tr \left( \sum_{k > 0}  T^{1/2} G_k R G^*_k T^{1/2} \right) \\
&\geq \Tr (TR),
\end{align*}
since the terms in the last sum are positive semi-definite matrices. Clearly 
equality is attained for $G_k=0$ for $k \geq 1$.
\end{proof}
From Lemma \ref{lem: monic filter opt} we deduce immediately that we should 
choose 
\[
B(z) = S^{-1}(z) Q^{-1}(z)
\] 
and the corresponding MSE is $e^{df}(\tilde G) = \kappa_{\delta,\epsilon}^2 \, \Tr(TR)$, 
with the positive definite matrices $T, R$ defined in 
\eqref{eq: spectral factorization G}, \eqref{eq: spectral factorization F}.

The final step would be to optimize the filter $G$ to minimize this expression of $e^{df}$.
Although this can be done using an approach similar to the one of the Section \ref{section: LMS mechanism},
see \cite{Yang:IT94:DF, LeNy_CDC13_eventStreamDP}, it appears that this procedure in general results 
in a pre-filter $G$ that does not depend on the query $F$, which could be an artifact of the initial assumption 
that past decisions are correct. 
This can be seen most easily in the single-input case where $T, R$ are positive scalars, 
so that $e^{df}(\tilde G) = TR$. In this product $F$ influences only in the factor $T$ 
and $G$ only in the factor $R$, hence the minimization over $G$ does not depend on $F$.
%This decoupling appears 
Optimizing $G$ independently of $F$ appears to lead to suboptimal designs in general, 
see  \cite{LeNy_CDC13_eventStreamDP}.

Hence, we propose the following design strategy for DF mechanisms. 
Note from \eqref{eq: Wiener smoother} that the (non-causal) LMS 
mechanism involves a Wiener smoother $H(z) = F(z) H_u(z)$, with $H_u$ the Linear 
Minimum Mean Squared Error estimator for $u$. We can interpret the DF mechanism 
on Fig. \ref{fig: DF mechanism} as introducing an additional (nonlinear) stage to the LMS 
mechanisms to discretize the estimate of $u$, and replacing $H_u$ by $H_1$. 
A (potentially suboptimal) strategy to improve on the performance of the LMS (or ZFE) mechanism 
is then to keep the same prefilter $G$ designed in Section \ref{section: LMS mechanism}, 
but simply replace the Wiener smoother or filter by the decision-feedback equalizer 
as described above, with a causal or almost causal approximation of the filter $H_1$. 
Our simulation results tend to confirm that good performance is achievable with this strategy.

\begin{remark}
Other DF mechanisms are possible.
For example, $H_1$ could be chosen as a zero-forcing ($H_1 = G^{-1}$ in the SISO case) rather
than a mean square equalizer, see, e.g., \cite{Belfiore:IEEE79:DFE}. 
\end{remark}

%\input{texFiles/7_examples.tex}

%\section{Related Work}

%The ZFE mechanism could be interpreted as a dynamic, causal version of the 
%matrix mechanism introduced in \cite{LiMiklau12_adaptiveMech} for static databases.

\section{Conclusion}

We have described a two-stage optimization procedure that can be used in the filtering 
of event streams in order to minimize the impact on performance of a differential privacy specification. 
The architecture considered here for the privacy-preserving mechanisms decomposes 
into a standard equalization or estimation problem, for which many alternatives techniques 
could be used depending on the scenario, and a first-stage privacy-preserving filter 
optimization problem. 
This two-stage design allows us to balance the privacy constraint and performance, 
and appears to be in fact quite general and even applicable to other definitions of privacy. 
Current work includes extending it to the design of differentially private nonlinear 
filters \cite{LeNy:TR12:privacyContraction}.

\bibliographystyle{IEEEtran/IEEEtran}
% argument is your BibTeX string definitions and bibliography database(s)

%\bibliography{IEEEabrv,../bib/paper}

% Mac
%\bibliography{/Users/jleny/Dropbox/Research_Papers_Reports/bibtex/securityPrivacy,/Users/jleny/Dropbox/Research_Papers_Reports/bibtex/signalProcessing,/Users/jleny/Dropbox/Research_Papers_Reports/bibtex/controlSystems,/Users/jleny/Dropbox/Research_Papers_Reports/bibtex/communications,/Users/jleny/Dropbox/Research_Papers_Reports/bibtex/optimization}

% PC
\bibliography{../../../../bibtex/securityPrivacy,../../../../bibtex/signalProcessing,../../../../bibtex/controlSystems,../../../../bibtex/communications,../../../../bibtex/optimization}

\end{document}